\documentclass[11pt]{article}

\def\bSig\mathbf{\Sigma}

\newcommand{\indep}{\perp\!\!\!\perp}
\newcommand{\simresult}[3]{#1 & #2 & #3}
\newcommand{\simtableheader}{%
\toprule
& \multicolumn{3}{c}{$\delta(1)$}
& \multicolumn{3}{c}{$\delta(0)$}
& \multicolumn{3}{c}{$\zeta(1)$}
& \multicolumn{3}{c}{$\zeta(0)$} \\
\cmidrule(lr){2-4}\cmidrule(lr){5-7}\cmidrule(lr){8-10}\cmidrule(lr){11-13}
Setting &
$N$ & $N_{\mathrm{sim}}$ & Power &
$N$ & $N_{\mathrm{sim}}$ & Power &
$N$ & $N_{\mathrm{sim}}$ & Power &
$N$ & $N_{\mathrm{sim}}$ & Power \\
\midrule
}
\usepackage[a4paper,margin=1in]{geometry}
\usepackage{amsmath,amssymb,amsfonts,amsthm}
\usepackage{bm}
\usepackage{url}
\usepackage{graphicx}
\usepackage{booktabs}
\usepackage{array}
\usepackage{placeins}
\usepackage{float}
\usepackage{natbib}
\usepackage[colorlinks=true,linkcolor=blue,citecolor=blue,urlcolor=blue]{hyperref}

\newtheorem{assumption}{Assumption}
\newtheorem{proposition}{Proposition}
\newtheorem{theorem}{Theorem}
\newtheorem{lemma}{Lemma}
\newtheorem{corollary}{Corollary}
\newenvironment{keywords}{\par\medskip\noindent\textbf{Keywords:} }{\par\medskip}

\title{Power and sample size calculations for causal mediation analysis with a binary mediator in randomized trials}
\author{Bosen Cui$^{1}$, 
	Yuhong Yang$^{1,2}$, and Fan Yang$^{1,2,*}$ \\
	$^{1}$Yau Mathematical Sciences Center, Tsinghua University, Beijing, China \\
	$^{2}$Yanqi Lake Beijing Institute of Mathematical Sciences and Applications, Beijing, China \\
	\texttt{yangfan1987@tsinghua.edu.cn}}
\date{}
\begin{document}

\label{firstpage}

\maketitle

\begin{abstract}
Mediation analyses are increasingly conducted in randomized trials, but a sample size adequate for the total treatment effect may leave the natural indirect effect (NIE) or natural direct effect (NDE) substantially underpowered. Randomization does not extend to the mediator, so precision depends on the conditional mediator distribution and the mediator-outcome association, neither of which enters a total-effect calculation. Planning outside linear structural equation models is largely based on simulation under a fully specified data-generating mechanism rarely available at the design stage.  This paper develops analytic power and sample size formulas for the NIE and NDE with a binary mediator and a continuous or binary outcome. Under standard identification assumptions, we focus on the ratio-of-mediator-probability weighting (RMPW) estimator that does not require an outcome model for effect estimation. We decompose the oracle variances of the RMPW estimators into components capturing mediator-probability-ratio variability, outcome variation, and their association, with an additional shared-arm covariance term for the NIE. Under a probit latent-index mediator and a working outcome model, these components are determined by a small number of interpretable design inputs rather than by the full joint distribution of covariates, mediator, and outcome.
Simulations show that the analytic sample sizes closely match simulation-based benchmarks, attain the target power, and maintain type I error near the nominal level. An ACTG175 illustration shows how pilot data can calibrate the inputs.
\end{abstract}

\begin{keywords}
Causal mediation analysis; Natural direct effect; Natural indirect effect; Ratio-of-mediator-probability weighting; Sample size calculation; Study design.
\end{keywords}
\section{Introduction}
Understanding how an intervention produces its effect is often an important scientific objective in a randomized trial. Causal mediation analysis provides a formal framework for this objective by distinguishing effects transmitted through a mediator from those operating through other pathways \citep{lee2021guideline}. In many clinical, behavioral, and policy applications, the mediator of interest is binary, indicating, for example, whether a participant adheres to treatment, achieves an early response, or experiences an intermediate clinical event \citep{ralston2014home,hong2015ratio,stensrud2017diastolic, vandenberghe2018surrogate,heerspink2022pre,rijnhart2023statistical}. When these pathways are prespecified scientific targets, adequate power for the corresponding natural indirect effect (NIE) or natural direct effect (NDE) must be considered when the trial is designed.

Power and sample size theory for the total treatment effect, i.e., the average treatment effect (ATE), is well developed, covering classical randomized-trial calculations, covariate-adjusted estimators and observational settings \citep{lachin1981introduction,dupont1990power,wittes2002sample,chow2017sample,schuler2022designing,shook2022power,branson2024power,liu2026sample}. These results, however, do not determine the sample size required for a mediation effect: on an additive scale, the same ATE can correspond to markedly different decompositions into the NIE and NDE. Even for a fixed target NIE or NDE, its precision depends on features not contained in conventional ATE calculations, including the treatment-specific conditional mediator distributions and the mediator--outcome relationship given baseline covariates in each treatment arm. The underlying difficulty is that treatment randomization does not extend to the mediator. Identification of natural effects therefore requires assumptions beyond randomization \citep{robins2003semantics,imai2010general,imai2010identification}, and their power calculations require additional design information about the mediator and mediator--outcome processes. 

Existing power and sample size methods for mediation analysis fall broadly into two classes. Earlier work studied the \citet{baron1986moderator} estimator and related tests under linear structural equation models, providing model-specific calculations or tabulated sample-size recommendations \citep{ mackinnon2002comparison,fritz2007required,vittinghoff2009sample}. These methods, however, formulate the indirect effect as a product of coefficients, which has an NIE interpretation only under additional linearity assumptions. In particular, with treatment--mediator interaction in the outcome model or nonlinear links for binary mediators or outcomes, the coefficient product no longer identifies the NIE. \cite{kelcey2017statistical} provide an important extension, allowing treatment--mediator interaction in closed-form power calculations for indirect effects in group-randomized trials. {Their analysis, however, is confined to indirect effects and retains linear mixed-effects models for continuous mediators and outcomes, and it does not cover the binary-mediator setting or the binary-outcome extension considered here.} For more general mediation settings, broadly applicable power calculations rely on Monte Carlo simulation under prespecified mediator and outcome models \citep{thoemmes2010power,schoemann2017determining,qin2024sample}. Simulation is flexible but requires specifying a detailed joint data-generating mechanism for the baseline covariates, mediator, and outcome. Such specifications are often weakly informed at the design stage. Moreover, simulation alone offers little guidance on which design-stage summaries suffice: distinct data-generating mechanisms may agree on selected marginal summaries yet imply substantially different estimator variances and sample-size requirements.

 In summary, existing approaches to power and sample size calculation either impose linear specifications of the mediator and outcome models that do not accommodate a binary mediator, or require a detailed joint data-generating mechanism that is rarely well informed at the design stage. In this paper, we address this limitation by starting from the sampling variability of a prespecified causal mediation estimator. Specifically, we consider ratio-of-mediator-probability weighting (RMPW) estimators of the NIE and NDE with a binary mediator in randomized trials \citep{hong2015ratio,bein2018two}.  { RMPW extends to mediation the role that inverse-probability-of-treatment weighting plays for the ATE and estimates the mediation effects without an outcome model.} Building on the oracle large-sample variance of the RMPW estimators, we decompose the arm-specific variance of each weighted outcome mean into three components: mediator-probability-ratio variability, outcome variation, and their association. For NIE, the calculation additionally requires the covariance between the two weighted means estimated from the same treatment arm. Under a probit latent-index mediator and a working outcome model, we then express these variance components in terms of interpretable design inputs. These inputs describe  mediator prevalences and covariate-driven mediator heterogeneity, outcome variance or event probabilities, pathway strength, and the association between the outcome and a baseline-covariate summary. Together with the target effect, treatment allocation, and error rates, they yield Wald-type power and sample size calculations for continuous and binary outcomes.

The remainder of the paper is organized as follows. Section~\ref{sec:setup} {reviews} the
causal mediation estimands, identification assumptions, and weighting
representation. Section~\ref{sec:variance-decomposition} derives a variance decomposition for the RMPW estimator with a binary mediator, and connects the
variance to power and sample size. Section~\ref{sec:binary-working-model} introduces the working model used
to evaluate the variance components at the design stage and links the power function with interpretable inputs. Section~\ref{sec:numerical} examines
the accuracy and sensitivity of the proposed calculations through simulation
studies and a design illustration based on ACTG175 pilot data. {Section~\ref{sec:conclusion} concludes. The Supplementary Material contains proofs and additional derivations; it also provides sample size calculations for both the NIE and NDE under linear mediator and outcome models with a continuous mediator, allowing treatment--mediator interaction in single-level data, thereby complementing the linear-model results discussed above.}

\section{Causal mediation estimands and the RMPW estimator}
\label{sec:setup}

Consider a randomized trial with an i.i.d.\ sample of size $N$. For each participant, let $O=(\mathbf X,A,M,Y)$, where $\mathbf X\in\mathcal X\subseteq\mathbb R^p$ is a vector of baseline covariates, $A\in\{0,1\}$ is the randomized treatment assignment with known allocation probabilities $\pi_a=P(A=a)\in (0,1)$, $M\in\{0,1\}$ is a binary mediator, and $Y\in\mathbb R$ is a continuous or binary outcome. Let $M(a)$ denote the potential mediator value under treatment level $a$,
and let $Y(a,m)$ denote the potential outcome if
treatment were set to $a$ and the mediator to $m$.
The observed mediator and outcome satisfy the consistency assumptions
$M=M(A)$ and $Y=Y(A,M(A))$. For $a,a'\in\{0,1\}$, define the mediation
functional
$
\theta(a,a')=E\{Y(a,M(a'))\}.
$
The natural indirect effect (NIE) under treatment level $a$ is
\[
\delta(a)=\theta(a,1)-\theta(a,0),
\]
and the natural direct effect (NDE) at mediator distribution level $a$ is
\[
\zeta(a)=\theta(1,a)-\theta(0,a).
\]
Thus, $\delta(a)$ varies the mediator from $M(0)$ to $M(1)$ while fixing
the treatment at $a$, whereas $\zeta(a)$ varies the
treatment from 0 to 1 while holding the mediator at $M(a)$.

Because the mediator is not randomized, identification of $\theta(a,a')$ requires assumptions beyond randomization of treatment assignment. We invoke sequential ignorability and positivity assumptions \citep{imai2010general,imai2010identification}.

	\begin{assumption}[Sequential ignorability and positivity] \label{seq ign}
For $a,a'\in\{0,1\}$ and $m\in\{0,1\}$:
(i) $\{Y(a',m),M(a)\}\indep A$; 
(ii) $
Y(a,m)\indep M(a')\mid A=a',\mathbf X 
$; and
(iii) 
$0<p_a(\mathbf X)<1$ almost surely, where
$
p_a(\mathbf x)=P(M=1\mid A=a,\mathbf X=\mathbf x).
$
	\end{assumption}
Condition (i) holds by randomization; (ii) requires no unmeasured pretreatment and no post-treatment mediator--outcome confounding given $\mathbf X$; (iii) requires both mediator levels to occur with positive probability at every covariate value within each treatment arm.

Among the equivalent identification strategies for $\theta(a,a')$, we develop
the sample size calculations based on the ratio-of-mediator-probability weighting (RMPW) estimator \citep{hong2015ratio,bein2018two}, a commonly used weighting estimator for mediation effects that plays a role analogous to inverse-probability-of-treatment weighting for the ATE. { An equivalent weighting representation, with weights expressed through treatment propensities rather than mediator probabilities, is developed by \citet{huber2014identifying}}. Alternative estimators based on regressions or semiparametric augmentation \citep{tchetgen2012semiparametric} would require, for their design-stage variance, prior information about outcome-model parameter values that is difficult to obtain before data collection, whereas, as shown in Section \ref{sec:binary-working-model}, the RMPW calculation requires weaker outcome-model inputs. Writing $
g_a(m\mid \mathbf x)
=
p_a(\mathbf x)^m\left( 1-p_a(\mathbf x)\right)^{1-m}
$ and 
$
R_{a,a'}(m,\mathbf x)
=
\frac{g_{a'}(m\mid \mathbf x)}
{g_a(m\mid \mathbf x)}
$, under
Assumption~\ref{seq ign},
\[
\theta(a,a')
=
E\left[
\frac{I(A=a)}{\pi_a}
R_{a,a'}(M,\mathbf X)Y
\right],
\]
where the ratio $R_{a,a'}$ reweights participants in arm $a$ so that, given covariates, their mediator distribution represents that under arm $a'$; when $a'=a$, $R_{a,a'}(M,\mathbf X)=1$, and the estimand reduces to the arm-$a$ outcome mean. The normalized RMPW estimator is  
\begin{equation*}
	\hat\theta(a,a')
	=
	\frac{
		\sum_{i=1}^N \mathbb{I}(A_i=a)R_{a,a'}(M_i,\mathbf X_i)Y_i
	}{
		\sum_{i=1}^N \mathbb{I}(A_i=a)R_{a,a'}(M_i,\mathbf X_i)
	},
\end{equation*}
with $\hat\delta(a)=\hat\theta(a,1)-\hat\theta(a,0)$ and
$\hat\zeta(a)=\hat\theta(1,a)-\hat\theta(0,a)$.
For the design stage calculations, we treat the mediator probabilities as
known and use the resulting oracle estimator as a tractable benchmark. In
practice, these probabilities are estimated from the data, and because the oracle and estimated-ratio variances have
no general ordering \citep{bein2018two}, we examine the estimated-ratio implementation in simulations. In the settings considered, the estimated-ratio implementation yields empirical power similar to the oracle-ratio benchmark.

\section{Variance decomposition and sample size calculation}
\label{sec:variance-decomposition}

This section derives the variance expressions underlying the design calculations, starting from the oracle large-sample variance of the RMPW estimator established by \cite{bein2018two}, in which the mediator probability ratios are fixed at their population values. 

\begin{proposition}[Oracle variance of the RMPW estimator]
\label{prop:oracle-rmpw-var}
Suppose Assumption~\ref{seq ign} holds and  $E\{R_{a,a'}^2(M,\mathbf X)(1+Y)^2\mid A=a\}<\infty$. For the oracle normalized RMPW estimator,
\[
\sqrt N\{\hat\theta(a,a')-\theta(a,a')\}
=
\frac{1}{\sqrt N}\sum_{i=1}^N\psi_{a,a'}(O_i)+o_p(1),
\]
where
$
\psi_{a,a'}(O)
=
\frac{\mathbb{I}(A=a)}{\pi_a}R_{a,a'}(M,\mathbf X)
\{Y-\theta(a,a')\}.
$
Consequently, 
\begin{equation*}
	\sqrt{N} \{\hat\theta(a,a')-\theta(a,a')\} \stackrel{d}{\to} N(0, V_\theta(a,a')),
\end{equation*}
where
\begin{equation}
V_\theta(a,a')
=
\frac{1}{\pi_a}
E\left[
R_{a,a'}^2(M,\mathbf X)
\{Y-\theta(a,a')\}^2
\mid A=a
\right].
\label{eq:rmpw-theta-var}
\end{equation}
\end{proposition}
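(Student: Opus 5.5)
The plan is to treat $\hat\theta(a,a')$ as a ratio of two sample means and linearize it. Write
\[
\hat\theta(a,a')=\frac{\bar U_N}{\bar W_N},\qquad \bar U_N=\frac1N\sum_{i=1}^N \frac{\mathbb I(A_i=a)}{\pi_a}R_{a,a'}(M_i,\mathbf X_i)Y_i,\qquad \bar W_N=\frac1N\sum_{i=1}^N \frac{\mathbb I(A_i=a)}{\pi_a}R_{a,a'}(M_i,\mathbf X_i).
\]
The common factor $\pi_a^{-1}$ cancels in the ratio, so this is the estimator in the statement.

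\textbf{Step 1: population means of numerator and denominator.} By the weighting identity displayed before the proposition, $E(\bar U_N)=\theta(a,a')$. For the denominator, condition on $(A=a,\mathbf X)$ and sum over $m\in\{0,1\}$:
\[
E\{R_{a,a'}(M,\mathbf X)\mid A=a,\mathbf X\}=\sum_m g_{a'}(m\mid\mathbf X)=1 .
\]
Positivity (Assumption~\ref{seq ign}(iii)) makes the ratio well defined. Hence $E(\bar W_N)=1$. The moment condition $E\{R_{a,a'}^2(1+Y)^2\mid A=a\}<\infty$ gives finite second moments of both summands, since $R^2\le R^2(1+Y)^2$ when $Y\ge 0$; more generally, $R^2$ and $R^2Y^2$ are bounded by constant multiples of $R^2(1+Y)^2+R^2$. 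I would handle this by also noting that $E(R^2\mid A=a)$ is finite, or by reading the condition as covering both terms.

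\textbf{Step 2: exact linearization.} Since $\bar W_N\to 1$ almost surely by the strong law, we have
\[
\hat\theta-\theta=\frac{\bar U_N-\theta\bar W_N}{\bar W_N}=\frac{1}{\bar W_N}\cdot\frac1N\sum_i\psi_{a,a'}(O_i),
\]
because $\frac{\mathbb I(A=a)}{\pi_a}R(Y-\theta)=\psi_{a,a'}$ identically. This step needs no Taylor expansion. Moreover, $E\psi_{a,a'}=\theta-\theta\cdot1=0$ by Step 1, and $\psi_{a,a'}$ has finite variance by the moment condition. The CLT therefore gives $N^{-1/2}\sum_i\psi_{a,a'}(O_i)=O_p(1)$. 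Multiplying by $1/\bar W_N-1=o_p(1)$ yields the stated representation with an $o_p(1)$ remainder.

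\textbf{Step 3: the variance.} The asymptotic normality follows from the CLT and Slutsky's lemma, with limiting variance $E\psi_{a,a'}^2$. Compute
\[
E\psi_{a,a'}^2=\pi_a^{-2}E\{\mathbb I(A=a)R^2(Y-\theta)^2\}=\pi_a^{-1}E\{R^2(Y-\theta)^2\mid A=a\},
\]
which is \eqref{eq:rmpw-theta-var}.

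The only delicate point is the moment bookkeeping in Step 1: ensuring that $E(R^2\mid A=a)<\infty$ and $E(R^2Y^2\mid A=a)<\infty$ both follow from the stated condition. Everything else is the standard ratio-estimator argument. Sequential ignorability enters only through the previously stated identification identity for $\theta(a,a')$.
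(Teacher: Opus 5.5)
Your proposal is correct and follows the paper's route (linearize the ratio of the two arm-$a$ weighted means, then apply the CLT). The one difference is that you replace the paper's first-order Taylor expansion with the exact identity $\hat\theta(a,a')-\theta(a,a')=\bar W_N^{-1}N^{-1}\sum_{i}\psi_{a,a'}(O_i)$, which makes the $o_p(1)$ remainder immediate and shows that only $E\psi_{a,a'}^2<\infty$ is needed.

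On the moment point you flagged, positivity alone does not give $E(R_{a,a'}^2\mid A=a)<\infty$, and neither does the literal hypothesis, since $(1+Y)^2$ vanishes where $Y=-1$, possibly where the ratio is large. Reading the condition as $E\{R_{a,a'}^2(1+|Y|)^2\mid A=a\}<\infty$, as the paper implicitly does, is therefore the right resolution.
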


\textit{The natural direct effect.} To understand the variance of the NDE estimator, first consider the building block $V_\theta(a,a')$ in Equation~\eqref{eq:rmpw-theta-var}. When $a'=a$,
$R_{a,a}(M,\mathbf X)=1$, and the formula reduces to the variance of
an arm-specific sample mean. When $a'\neq a$, the variance additionally depends on the distribution of the mediator weights and their relationship with the outcome. To separate these contributions, define
$
d_{a,a'}=E[R_{a,a'}^2(M,\mathbf X)\mid A=a],
$
$
B_{a,a'}=E[\{Y-\theta(a,a')\}^2\mid A=a],
$
and
$
C_{a,a'}=
\operatorname{Cov}\left[
R_{a,a'}^2(M,\mathbf X),
\{Y-\theta(a,a')\}^2
\mid A=a
\right].
$
Then \eqref{eq:rmpw-theta-var} can be written as
\begin{equation}
V_\theta(a,a')
=
\frac{1}{\pi_a}
\left\{
 d_{a,a'}B_{a,a'}+C_{a,a'}
\right\}.
\label{eq:theta-design-effect-decomp}
\end{equation}
For a binary mediator, 
\[
d_{a,a'}
=
E\left[
\frac{p_{a'}^2(\mathbf X)}{p_a(\mathbf X)}
+
\frac{\{1-p_{a'}(\mathbf X)\}^2}{1-p_a(\mathbf X)}
\right] = 1+ E \left[\frac{\left[p_{a'}(\mathbf{X}) - p_a(\mathbf{X})\right]^2}{p_a(\mathbf{X}) (1-p_a(\mathbf{X})) }\right]\geq 1,
\]
with equality if and only if $p_{a'}(\mathbf{X})=p_{a}(\mathbf{X})$ almost surely, i.e., the two conditional mediator distributions coincide; $d_{a,a'}$ is large when the two arms disagree on the conditional mediator probabilities in covariate regions where $p_a(\mathbf{X})$ is near 0 or 1. The term $B_{a,a'}$ is an outcome second moment around $\theta(a,a')$, and $C_{a,a'}$ captures how mediator reweighting concentrates observations with different outcome variability. Directly specifying $C_{a,a'}$ at the design stage is difficult; its Cauchy-Schwarz bound gives a conservative alternative but discards both the sign of $C_{a,a'}$ and the dependence structure that determines its magnitude.

The NDE $\zeta(a)=\theta(1,a)-\theta(0,a)$ contrasts two mediation functionals whose RMPW estimators use observations from different treatment arms. The two influence functions contain mutually exclusive treatment indicators and therefore have zero covariance. Consequently,
\begin{equation}
V_\zeta(a)
=
\sum_{j=0}^1
\frac{1}{\pi_j}
\left\{
 d_{j,a}B_{j,a}+C_{j,a}
\right\}= 
\frac{B_{a,a}}{\pi_a}
+
\frac{1}{\pi_{1-a}}
\left\{
d_{1-a,a}B_{1-a,a}+C_{1-a,a}
\right\}.
\label{eq:direct-design-effect-var}
\end{equation}
The first term on the right-hand side involves no mediator reweighting, whereas
the second term accounts for transporting the mediator distribution from arm
$a$ to arm $1-a$. 

\textit{The natural indirect effect.} Unlike the NDE, the NIE $\delta(a)=\theta(a,1)-\theta(a,0)$ cannot be handled by simply adding two component variances: its RMPW estimators $\hat{\theta}(a,1)$ and $\hat{\theta}(a,0)$ use the same treatment-arm observations, so their influence functions overlap and their covariance enters the variance. Therefore, the asymptotic variance of $\hat{\delta}(a)$ is
\begin{equation}
\begin{split}
V_\delta(a)
=
\frac{1}{\pi_a}
E\Big[&
\{
R_{a,1}(M,\mathbf X)\{Y-\theta(a,1)\}
-
R_{a,0}(M,\mathbf X)\{Y-\theta(a,0)\}
\}^2
\mid A=a
\Big].
\end{split}
\label{eq:indirect-general-var}
\end{equation}
Equivalently,
\begin{equation}
V_\delta(a)
=
\frac{1}{\pi_a}
\left\{
 d_{a,1}B_{a,1}+C_{a,1}
+d_{a,0}B_{a,0}+C_{a,0}
-2K_a
\right\},
\label{eq:indirect-design-effect-var}
\end{equation}
where
$ K_a
=
\operatorname{Cov}\left[
R_{a,1}(M,\mathbf X)\{Y-\theta(a,1)\},
R_{a,0}(M,\mathbf X)\{Y-\theta(a,0)\}
\mid A=a
\right]
$ is the covariance between the weighted residual terms
associated with $\theta(a,1)$ and $\theta(a,0)$, which does not
vanish in general because both terms are based on the same observations from treatment
arm $a$. As shown in the Supplementary Material, $K_a\geq 0$, so omitting the term $-2K_a$ yields a
conservative upper bound for $V_\delta(a)$; combining this omission with separate Cauchy--Schwarz bounds for
$C_{a,1}$ and $C_{a,0}$, however, is even more conservative and can substantially overstate the required sample size. Because $C_{a,a'}$ and
$K_a$ involve joint weight--outcome behavior that is difficult to elicit directly, Section~\ref{sec:binary-working-model} evaluates them
under working models in terms of interpretable design
inputs.

\textit{Wald approximation.} Let $\tau(a)$ denote $\zeta(a)$ or $\delta(a)$,
with design-stage variance $V_\tau(a)$ on the $\sqrt N$ scale. Consider a one-sided level $\alpha$ test of $H_0: \tau (a) =0$ against $H_1: \tau(a) >0$. We reject $H_0$ when $\hat\tau (a)>
z_{1-\alpha}
\sqrt{\frac{V_\tau(a)}{N}}.$
 Under a prespecified alternative $\tau(a)=\tau_*(a)>0$ with variance $V_{\tau_*}(a)$, the standard Wald approximation gives 
\[
\operatorname{Power}(N;\tau_*(a))
\approx
\Phi\left(
\frac{\sqrt N\tau_*(a)}{\sqrt{V_{\tau_*}(a)}}
-z_{1-\alpha}
\right),
\]
so the required sample size $N$ to reach power $1-\beta$ for a level-$\alpha$ test is
\begin{equation}
N
\approx
\left\lceil
\frac{(z_{1-\alpha}+z_{1-\beta})^2V_{\tau_*}(a)}{\tau_*^2(a)}
\right\rceil .
\label{eq:wald-sample-size}
\end{equation}
 The
main task is therefore to specify $V_{\tau_*}$ at the design stage, which the next section addresses.
Proofs of Proposition~\ref{prop:oracle-rmpw-var},
Equations~\eqref{eq:direct-design-effect-var}--\eqref{eq:wald-sample-size},
and the result $K_a\geq 0$ are given in Section~\ref{app:proof-oracle-rmpw-placeholder} of the Supplementary Material.

\section{Design-stage calculation}
\label{sec:binary-working-model}
This section expresses the variance components $d_{a,a'}$, $B_{a,a'}$, $C_{a,a'}$ and $K_a$ of Section~\ref{sec:variance-decomposition}  in terms of prespecified
planning quantities under working models for the mediator and the outcome. We first evaluate $d_{a,a'}$
under a binary-mediator working model, which does not depend on the outcome type, and then evaluate the remaining components separately for continuous and binary outcomes. Detailed derivations of the results in this section are provided in Section~\ref{app:main-text-proofs} of the Supplementary Material.

Consider a probit latent-index representation for the binary mediator,
\begin{equation*}
M(a)=\mathbb{I}\{f_M(\mathbf X)+a\Delta_M+\epsilon_M>0\},
\end{equation*}
where $\epsilon_M\sim \mathcal{N}(0,1)$ is independent of $\mathbf X$, 
$f_M(\mathbf X)$ is the baseline component of the latent mediator propensity,
and $\Delta_M$ is the treatment-induced shift. Let $W=f_M(\mathbf X)$ denote this baseline-covariate summary. When $f_M(\mathbf{X})$ is a linear combination of standardized baseline covariates satisfying the Lyapunov-type condition in Lemma~\ref{lem:lyapunov-score} of the Supplementary Material, which in essence requires that several covariates contribute to the index and none dominates it, we approximate the marginal distribution of $W$ by $N(\mu_W,\sigma^2_W)$. The approximation is plausible if the latent index aggregates many modest contributions rather than being driven by a single dominant covariate, or if any dominant covariates are approximately normally distributed. The former is common in practice: to make the sequential ignorability assumption credible, 
$\mathbf{X}$ is typically a rich collection of baseline covariates. This approximation is a design-stage device for translating planning inputs into variance components; the variance results of Section~\ref{sec:variance-decomposition} do not depend on it. Section~\ref{sec:numerical} confirms that the resulting calculations remain accurate under markedly non-Gaussian covariates.

Under the latent-index representation, the conditional mediator probability is $p_a^W(w)
=
P\{M(a)=1\mid W=w\}
=
\Phi(w+a\Delta_M),
a=0,1,$ where $\Phi(\cdot)$ denotes the standard normal distribution function. Rather than specifying these latent
parameters $(\mu_W, \sigma_W^2, \Delta_M)$ directly, we parameterize the working model through three design inputs:
the marginal mediator probabilities $q_a=P\{M(a)=1\}, a=0,1$, and
$
R_M^2=\frac{\sigma_W^2}{1+\sigma_W^2},
$
the proportion of latent mediator variation explained by baseline covariates. These inputs determine the working
parameters:
\begin{equation}
\sigma_W^2=\frac{R_M^2}{1-R_M^2},
\qquad
\mu_W=\frac{\Phi^{-1}(q_0)}{\sqrt{1-R_M^2}},
\qquad
\Delta_M=\frac{\Phi^{-1}(q_1)-\Phi^{-1}(q_0)}
{\sqrt{1-R_M^2}} .
\label{eq:binary-score-parameters}
\end{equation}

\textit{Evaluation of $d_{a,a'}$.} Define the working Bernoulli mass function and mediator probability ratio by
$g_a^W(m\mid w)
	=
	\{p_a^W(w)\}^m
	\{1-p_a^W(w)\}^{1-m}$ and $R_{a,a'}^W(m,w)
	=
	\frac{
		g_{a'}^W(m\mid w)
	}{
		g_a^W(m\mid w)
	}.
$
Under the working model,
\begin{equation*}
	\begin{aligned}
		d_{a,a'}
		&=
		E_W\left[
		\sum_{m=0}^1
		g_a^W(m\mid W)
		\{R_{a,a'}^W(m,W)\}^2
		\right]=
		1+
		E_W\left[
		\frac{
			\{p_{a'}^W(W)-p_a^W(W)\}^2
		}{
			p_a^W(W)\{1-p_a^W(W)\}
		}
		\right],
	\end{aligned}
\end{equation*}
where $E_W$ denotes expectation over $W\sim\mathcal{N}(\mu_W,\sigma_W^2)$.
Given
$(q_0,q_1,R_M^2)$, this is a one-dimensional Gaussian integral that can be evaluated numerically.

%The remainder of this section treats continuous and binary outcomes
%separately. For each outcome type, we first specify a working outcome model
%and calibrate its parameters from outcome-specific planning quantities. We
%then evaluate the remaining variance components $B_{a,a'}$, $C_{a,a'}$, and $K_a$, substitute them into the
%NDE and NIE variance decompositions of
%Section~\ref{sec:variance-decomposition}, and identify the inputs required
%for the corresponding power and sample size calculations.
\subsection{Continuous outcome}
For a continuous outcome, the key step is to project the prognostic component of the potential outcome onto $W$. Consider the working outcome model
\begin{equation*}
Y(a,m)=f_a(\mathbf X)+\gamma_a m+\epsilon_a,
\qquad
E(\epsilon_a\mid \mathbf X, M(a))=0,
\end{equation*}
where $f_a(\mathbf X)$ is the prognostic component and $\gamma_a$ describes the mediator--outcome pathway strength; both may differ across treatment arms. We further assume homoscedasticity $\operatorname{Var}\{\epsilon_a\mid W,M(a)\} = \sigma_{\epsilon,a}^2 < \infty,$
and adopt a joint Gaussian working approximation for
$\{f_a(\mathbf X),W\}$. The approximation is exact for jointly Gaussian covariates with linear $f_a(\mathbf X)$ and $W$ and is otherwise motivated by a multivariate Lyapunov argument (Section~\ref{app:lyapunov-score} of the
Supplementary Material).

The association between $f_a(\mathbf X)$ and
$W$ is summarized by the population linear projection. For $R_M^2>0$, define $b_a= \frac{
	\operatorname{Cov}\{f_a(\mathbf X),W\}
}{
	\sigma_W^2
}$ and write
$
f_a(\mathbf X)
=
E\{f_a(\mathbf X)\}+b_a(W-\mu_W)+r_a
$. Under the joint Gaussian approximation, the residual $r_a$ satisfies $E(r_a)=0$ and is independent of $W$, and hence of $\{M(a),W\}$, because $M(a)$ depends on $\mathbf{X}$ only through $W$ and an independent latent error. When $R_M^2=0$, $W$ is degenerate, $b_a$ is not defined by the ratio above, and the
projection term $b_a(W-\mu_W)$ is omitted.

The outcome model can then be written as 
\begin{equation}
Y(a,m)= E\{f_a(\mathbf X)\}+ \gamma_a q_a + b_a(W-\mu_W)+\gamma_a(m-q_a)+v_a,
\label{eq:continuous-outcome-projection}
\end{equation}
where $v_a = r_a +\epsilon_a$ satisfies 
$
E\{v_a\mid W,M(a)\}=0$ and
$
E\{v_a^2\mid W,M(a)\} = \operatorname{Var}(r_a)+\sigma_{\epsilon,a}^2:=\sigma_{v,a}^2.$ These moments follow from the working 
models rather than being additional restrictions, and no distribution is imposed on $v_a$ beyond them.

Under \eqref{eq:continuous-outcome-projection}, the NIE and NDE are
\[
\delta(a)=\gamma_a(q_1-q_0),
\qquad
\zeta(a)=E\{f_1(\mathbf X)\}-E\{f_0(\mathbf X)\}+(\gamma_1-\gamma_0)q_a.
\]
The NIE is a pure pathway product: $q_1-q_0$ is the treatment-induced
change in mediator prevalence and $\gamma_a$ the mediator--outcome
pathway strength. The target NIE $\delta(a)$ therefore calibrates $\gamma_a$ and thereby enters the variance components. The NDE additionally
involves $E\{f_1(\mathbf X)\}-E\{f_0(\mathbf X)\}$, which is unconstrained by the mediator pathway and enters none of the variance component; the target NDE $\zeta(a)$ is therefore specified
as input, and affects
the calculation only through the effect size in \eqref{eq:wald-sample-size}. When an ATE target is also prespecified, as is typical in trial planning, the NDE target is not free but implied by the decomposition $\zeta(a)=\mathrm{ATE}-\delta(1-a)$, and should be specified accordingly.

\textit{Evaluation of $B_{a,a'}$.} Recall $B_{a,a'} =E[\{Y-\theta(a,a')\}^2\mid A=a]$, which measures outcome variation in arm $a$ around $\theta(a,a')$. Because $Y = Y(a,M(a))$ given $A=a$, this second moment decomposes into the variance of $Y(a,M(a))$ and the squared deviation of its mean $\theta(a,a)$ from $\theta(a,a')$. We therefore take as a design input the marginal outcome variance $S_a^2=\operatorname{Var}\{Y(a,M(a))\}
$, the standard variance input of conventional sample size calculations. Under the projected model \eqref{eq:continuous-outcome-projection} , 
\begin{equation*}
	B_{a,a'}
	=
	S_a^2
	+
	\gamma_a^2(q_a-q_{a'})^2.
\end{equation*}
Thus, $B_{a,a}=S_a^2$, and for $a'\ne a$ the additional term reflects the mediator prevalence difference between the observed arm and the target.

\textit{Evaluation of  $C_{a,a'}$.} Recall $C_{a,a'}
=
\operatorname{Cov}\left[
R_{a,a'}^2(M,\mathbf X),
\{Y-\theta(a,a')\}^2
\mid A=a
\right].$
 Define the centered conditional mean 
	$h_{a,a'}(m,w)
	= E\{Y-\theta(a,a')\mid W=w,M=m,A=a\} = b_a(w-\mu_W)
	+
	\gamma_a(m-q_{a'}),
$
where the second equality holds under model \eqref{eq:continuous-outcome-projection}. The mean-zero and homoscedasticity properties of $v_a$ then give
\begin{equation}
	C_{a,a'}
	=
	E_W\left[
	\sum_{m=0}^1
	g_a^W(m\mid W)
	\left\{
	\{R_{a,a'}^W(m,W)\}^2-d_{a,a'}
	\right\}
	h_{a,a'}^2(m,W)
	\right].
	\label{eq:continuous-C}
\end{equation}
The mediator-related quantities in \eqref{eq:continuous-C} are determined by
$(q_0,q_1,R_M^2)$, and $\gamma_a$ is calibrated from a target NIE $\delta(a)$ through $\gamma_a=\delta(a)/(q_1-q_0)$; a nonzero target NIE requires $q_1\neq q_0$, so the ratio is well defined. These inputs and $S_a^2$ do not determine $b_a$, so evaluating
$C_{a,a'}$ requires one additional input. We introduce $\kappa_a = \operatorname{Corr}\{Y(a,M(a)),W\}$, the correlation between the potential outcome and the baseline-covariate summary. Because $M(a)$ and $W$ are associated, translating $\kappa_a$ into $b_a$
also requires the mediator-score covariance $\xi_a=\operatorname{Cov}\{M(a),W\}$. Using the Gaussian marginal distribution of $W$, $\xi_a=\operatorname{Cov}\{M(a),W\}
	=
	\operatorname{Cov}\{p_a^W(W),W\} = \frac{R_M^2}{\sqrt{1-R_M^2}}
	\phi\{\Phi^{-1}(q_a)\},
$
where $\phi$ is the standard normal density, and the working outcome model gives $
	\operatorname{Cov}\{Y(a,M(a)),W\}
	=
	b_a\sigma_W^2+\gamma_a	\xi_a.
$
For $R_M^2>0$, equating this expression with $\kappa_aS_a\sigma_W$, where
$S_a=(S_a^2)^{1/2}$, gives
\begin{equation*}
	b_a
	=
	\frac{
		\kappa_aS_a\sigma_W-\gamma_a\xi_a
	}{
		\sigma_W^2
	}.
\end{equation*}
Substituting this expression for $b_a$ into $h_{a,a'}$ in
\eqref{eq:continuous-C} leaves a one-dimensional expectation over $W$.
When $R_M^2=0$, $W$ is degenerate, so $\kappa_a$ and the projection term
are omitted and $h_{a,a'}(m)=\gamma_a(m-q_{a'})$.
%Derivations of these equalities are provided in
%Section~\ref{app:continuous-outcome-working-derivation} of the
%Supplementary Material.

\textit{Evaluation of $K_{a}$.} Because $R_{a,a}(M,\mathbf X)\equiv 1$, the definition of $K_a$ in Section \ref{sec:variance-decomposition} simplifies to
\[
K_a
=
\operatorname{Cov}\left[
R_{a,1-a}(M,\mathbf X)\{Y-\theta(a,1-a)\},
Y-\theta(a,a)
\mid A=a
\right].
\]
Under model \eqref{eq:continuous-outcome-projection}, \begin{equation*}
	K_a
	=
	E_W\left[
	\sum_{m=0}^1
	g_{1-a}^W(m\mid W)
	\left\{
	h^2_{a,1-a}(m,W)
	+
	\sigma_{v,a}^2
	\right\}\right] \ge 0.
\end{equation*}
The residual variance is determined
by the planning inputs through \begin{equation*}
	\sigma_{v,a}^2
	=
	S_a^2
	-
	b_a^2\sigma_W^2
	-
	\gamma_a^2q_a(1-q_a)
	-
	2b_a\gamma_a\xi_a.
\end{equation*}
The requirement $\sigma_{v,a}^2\geq0$ is a coherence check: a negative value indicates that the specified inputs cannot simultaneously hold
under the working model.
When $R_M^2=0$, this expression reduces to $\sigma_{v,a}^2
=
S_a^2-\gamma_a^2q_a(1-q_a).$

\textit{Design-stage calculation.} The preceding calculations determine every component of the NDE and NIE
variance decompositions of Section~\ref{sec:variance-decomposition}.
Table~\ref{tab:continuous-outcome-planning-inputs} collects the results: given the design-stage inputs (Panel A), the calibration and intermediate working quantities (Panel B) determine the variance components (Panel C), which, substituted into the variance decompositions, yield the sample size through \eqref{eq:wald-sample-size}. The
tabulated expressions hold under Assumption~\ref{seq ign}, the
oracle-ratio moment condition
$E\{R_{a,a'}^2(M,\mathbf X)Y^2\mid A=a\}<\infty$, and the working models
of this subsection with the coherence condition $\sigma_{v,a}^2\geq0$.

\begin{table}[H]
\centering
\caption{Design-stage inputs, working quantities, and variance components
for a continuous outcome.}
\label{tab:continuous-outcome-planning-inputs}
\footnotesize
\setlength{\tabcolsep}{2pt}
\renewcommand{\arraystretch}{1.16}
\begin{tabular}{@{}>{\raggedright\arraybackslash}m{0.19\linewidth}
>{\centering\arraybackslash}m{0.12\linewidth}
>{\raggedright\arraybackslash}m{0.45\linewidth}
>{\centering\arraybackslash}m{0.16\linewidth}@{}}
\toprule
Quantity & Notation & Definition or working-model expression
& Design-stage inputs \\
\midrule
\multicolumn{4}{@{}l}{\textit{Panel A. Design-stage inputs}} \\
Treatment allocation
& $\pi_a$
& $P(A=a)$, $a=0,1$
& (specified) \\
Mediator prevalence
& $q_a$
& $P\{M(a)=1\}$, $a=0,1$
& (specified) \\
Latent mediator $R^2$
& $R_M^2$
& proportion of within-arm latent-index variance explained by $W$
& (specified) \\
Target NIE
& $\delta_*$
& target value of $\delta(a)$; NIE planning
& (specified) \\
Target NDE
& $\zeta_*$
& target value of $\zeta(a)$; NDE planning
& (specified) \\
Marginal outcome variance
& $S_a^2$
& $\operatorname{Var}\{Y(a,M(a))\}$
& (specified) \\
Outcome--score correlation
& $\kappa_a$
& $\operatorname{Corr}\{Y(a,M(a)),W\}$
& (specified) \\
\addlinespace
\multicolumn{4}{@{}l}{\textit{Panel B. Calibration and intermediate
working quantities}} \\
Latent-score variance
& $\sigma_W^2$
& $\displaystyle \frac{R_M^2}{1-R_M^2}$
& $R_M^2$ \\
Latent-score mean
& $\mu_W$
& $\displaystyle \frac{\Phi^{-1}(q_0)}{\sqrt{1-R_M^2}}$
& $q_0,R_M^2$ \\
Treatment shift
& $\Delta_M$
& $\displaystyle \frac{\Phi^{-1}(q_1)-\Phi^{-1}(q_0)}{\sqrt{1-R_M^2}}$
& $q_0,q_1,R_M^2$ \\
Mediator probability
& $p_a^W(w)$
& $\displaystyle \Phi(w+a\Delta_M)$
& $q_0,q_1,R_M^2$ \\
Mediator mass
& $g_a^W(m\mid w)$
& $\displaystyle \{p_a^W(w)\}^m\{1-p_a^W(w)\}^{1-m}$
& $q_0,q_1,R_M^2$ \\
Mediator ratio
& $R_{a,a'}^W(m,w)$
& $\displaystyle \frac{g_{a'}^W(m\mid w)}{g_a^W(m\mid w)}$
& $q_0,q_1,R_M^2$ \\
Pathway coefficient
& $\gamma_a$
& $\displaystyle \frac{\delta_*}{q_1-q_0}$
& $\delta_*,q_0,q_1$ \\
Mediator--score covariance
& $\xi_a$
& $\displaystyle \frac{R_M^2}{\sqrt{1-R_M^2}}\phi\{\Phi^{-1}(q_a)\}$
& $q_a,R_M^2$ \\
Outcome--score coefficient
& $b_a$
& $\displaystyle
\frac{\kappa_aS_a\sigma_W-\gamma_a\xi_a}{\sigma_W^2}$
& $\displaystyle
\begin{gathered}
q_a,R_M^2,S_a^2,\\[-2pt]
\kappa_a,\gamma_a
\end{gathered}$ \\
Residual variance
& $\sigma_{v,a}^2$
& $\displaystyle
S_a^2-b_a^2\sigma_W^2-\gamma_a^2q_a(1-q_a)-2b_a\gamma_a\xi_a$
& $\displaystyle
\begin{gathered}
q_a,R_M^2,S_a^2,\\[-2pt]
\kappa_a,\gamma_a
\end{gathered}$ \\
Centered mean function
& $h_{a,a'}(m,w)$
& $\displaystyle b_a(w-\mu_W)+\gamma_a(m-q_{a'})$
& $\displaystyle
\begin{gathered}
q_0,q_1,R_M^2,S_a^2,\\[-2pt]
\kappa_a,\gamma_a
\end{gathered}$ \\
\addlinespace
\multicolumn{4}{@{}l}{\textit{Panel C. Variance components}} \\
Ratio second moment
& $d_{a,a'}$
& $\displaystyle 1+E_W\left[
\frac{\{p_{a'}^W(W)-p_a^W(W)\}^2}
{p_a^W(W)\{1-p_a^W(W)\}}
\right]$
& $q_0,q_1,R_M^2$ \\
Outcome second moment
& $B_{a,a'}$
& $\displaystyle S_a^2+\gamma_a^2(q_a-q_{a'})^2$
& $q_0,q_1,S_a^2,\gamma_a$ \\
Ratio--outcome covariance
& $C_{a,a'}$
& \resizebox{\linewidth}{!}{$\displaystyle
E_W\!\left[\sum_{m=0}^1 g_a^W(m\mid W)
\left\{\{R_{a,a'}^W(m,W)\}^2-d_{a,a'}\right\}
h_{a,a'}^2(m,W)\right]$}
& $\displaystyle
\begin{gathered}
q_0,q_1,R_M^2,S_a^2,\\[-2pt]
\kappa_a,\gamma_a
\end{gathered}$ \\
Shared-arm covariance
& $K_a$
& $\displaystyle E_W\!\left[\sum_{m=0}^1g_{1-a}^W(m\mid W)
\{h_{a,1-a}^2(m,W)+\sigma_{v,a}^2\}\right]$
& $\displaystyle
\begin{gathered}
q_0,q_1,R_M^2,S_a^2,\\[-2pt]
\kappa_a,\gamma_a
\end{gathered}$ \\
\bottomrule
\end{tabular}
\end{table}

%In summary, the design-stage calculation for a continuous outcome
%proceeds from Table~\ref{tab:continuous-outcome-planning-inputs}: given
%the Panel A inputs, Panels B and C determine the working variance
%components, which, substituted into the variance decompositions of
%Section~\ref{sec:variance-decomposition}, yield the sample size through
%\eqref{eq:wald-sample-size}. The calculation rests on two working
%descriptions, under which the tabulated expressions are exact: a probit
%latent-index model for the binary mediator with a Gaussian approximation
%to the baseline-covariate summary $W$, and an outcome model additive in
%the mediator within each treatment arm, with joint normality of
%$f_a(\mathbf X)$ and $W$ supporting the projection and conditional
%homoscedasticity determining the residual variance from $S_a^2$. The
%working model allows the prognostic component and the pathway
%coefficient to differ between arms but does not accommodate
%covariate-dependent mediator effects within an arm. The numerical study
%in Section~\ref{sec:numerical} additionally assesses the estimator when
%the mediator probabilities are estimated rather than known.

\subsection{Binary outcome}
The mediator working model and $d_{a,a'}$ are unchanged. Binary outcomes, however, have a mean-dependent variance and require a separate working model. We reevaluate $B_{a,a'}$, $C_{a,a'}$, and $K_a$ under a latent-index outcome model. For each treatment level $a$, consider \begin{equation*}
	Y(a,m)
	=
	\mathbb{I}\left\{
	f_a^*(\mathbf X)
	+
	\gamma_a^*(m-q_a)
	+
	\epsilon_a^*
	>
	0
	\right\},
	\qquad
	\epsilon_a^*\sim\mathcal{N}(0,1),
\end{equation*}
where $f_a^*(\mathbf X)$ is the prognostic component and
$\gamma_a^*$ is the mediator--outcome pathway strength, both on the latent scale, and the latent error $\epsilon_a^*$ is independent of $\{\mathbf X, M(a)\}$. As in the continuous-outcome case, for $R_M^2>0$, define
$\widetilde b_a
=
\frac{
	\operatorname{Cov}\{f_a^*(\mathbf X),W\}
}{
	\sigma_W^2
}$ and  write the population linear projection
\begin{equation*}
f_a^*(\mathbf X)
=
E\{f_a^*(\mathbf X)\}
+
\widetilde b_a(W-\mu_W)
+
v_a^*,
\end{equation*} with $v_a^*$
independent of $\{W, M(a)\}$ under the joint Gaussian working approximation for $\{f_a^*(\mathbf X), W\}$. The compound latent residual $u_a^*=v_a^*+\epsilon_a^*$ therefore satisfies 
$
	u_a^*\mid\{W,M(a)\}
	\sim
	\mathcal{N}(0,\sigma_{u,a}^2)$, where $\sigma_{u,a}^2=\operatorname{Var}(v_a^*)+1.$

Define the standardized coefficients
$\alpha_a^\rho=E\{f_a^*(\mathbf X)\}/\sigma_{u,a}$,
$b_a^\rho=\widetilde b_a/\sigma_{u,a}$, and
$\gamma_a^\rho=\gamma_a^*/\sigma_{u,a}$.
Under consistency, the conditional distribution of $Y$ given
$(A=a,M=m,W=w)$ is Bernoulli with event probability
\[
\rho_a(w,m)
=
\Phi\left\{
\alpha_a^\rho
+
b_a^\rho(w-\mu_W)
+
\gamma_a^\rho(m-q_a)
\right\}.
\]
The corresponding mediation functional is
\begin{equation}
	\theta(a,a')
	=
	E_W\left[
	\sum_{m=0}^1
	g_{a'}^W(m\mid W)\rho_a(W,m)
	\right].
\label{eq:binary-outcome-theta}
\end{equation}
Given the mediator inputs $(q_0,q_1,R_M^2)$, the three coefficients
$(\alpha_a^\rho,b_a^\rho,\gamma_a^\rho)$ in $\rho_a(w,m)$ are jointly
calibrated to match the marginal event probability $\theta(a,a)$, the
outcome--$W$ correlation $\kappa_a$, and the target NIE $\delta(a)$,
whenever the corresponding calibration equations admit a compatible
solution.  %The calibration equations and computational details are provided in Section~\ref{app:proof-binary-outcome-working} of the Supplementary Material.

\textit{Evaluation of $B_{a,a'}$.}
Because $Y$ is binary in arm $a$, its variance is
$\theta(a,a)\{1-\theta(a,a)\}$, so that
\begin{equation}
	B_{a,a'}
	=
	\theta(a,a)\{1-\theta(a,a)\}
	+
	\{\theta(a,a)-\theta(a,a')\}^2.
	\label{eq:binary-B}
\end{equation}
In particular,
$B_{a,a}
	=
	\theta(a,a)\{1-\theta(a,a)\}$ and $
	B_{a,1-a}
	=
	\theta(a,a)\{1-\theta(a,a)\}+\delta^2(a).
$
Thus $B_{a,a'}$ is determined by the marginal event probability $\theta(a,a)$ and the target NIE $\delta(a)$.

\textit{Evaluation of $C_{a,a'}$.}
Using the Bernoulli structure of the outcome and the centering of the
squared mediator ratio around $d_{a,a'}$, $C_{a,a'}$ can be reformulated as
\begin{equation}
	\begin{split}
		C_{a,a'}
		=
		\{1-2\theta(a,a')\}
		E_W\Bigg[
		\sum_{m=0}^1
		&g_a^W(m\mid W)
		\left\{
		\{R_{a,a'}^W(m,W)\}^2-d_{a,a'}
		\right\}
		\rho_a(W,m)
		\Bigg].
	\end{split}
	\label{eq:binary-C}
\end{equation}
Unlike the constant
residual variance in the continuous-outcome calculation, the Bernoulli
variance changes with $(W,M)$ and is incorporated through $\rho_a(W,m)$ in
\eqref{eq:binary-C}. When $a'=a$, the mediator ratio is identically one and
 $C_{a,a}=0$. For $a'\neq a$, Equation~\eqref{eq:binary-C} is a
one-dimensional expectation determined by the mediator inputs
$(q_0,q_1,R_M^2)$ and the outcome inputs
$\{\theta(a,a),\kappa_a,\delta(a)\}$.

\textit{Evaluation of $K_a$.}
Because $R_{a,a}^W\equiv1$, only the reweighting by $R_{a,1-a}^W$ enters the
definition of $K_a$ in Section~\ref{sec:variance-decomposition}. Under the
change of mediator distribution induced by this ratio, $Y$ remains binary
with event probability $\theta(a,1-a)$, and $K_a$ reduces to the variance of
this reweighted outcome:
\begin{equation}
	K_a
	=
	\theta(a,1-a)\{1-\theta(a,1-a)\}
	\geq0;
	\label{eq:binary-K}
\end{equation}
$K_a$ requires only $\theta(a,1-a)$, which is determined by the design inputs $\theta(a,a)$ and the target NIE $\delta(a)$.

\textit{Design-stage calculation} The preceding calculations determine every component of the variance
decompositions of Section~\ref{sec:variance-decomposition}.
Table~\ref{tab:binary-outcome-planning-inputs} collects the results: given the design-stage inputs (Panel A), the working quantities (Panel B) determine the variance components (Panel C), and the sample size follows from \eqref{eq:wald-sample-size} as in the continuous case. Because the marginal event probabilities $\theta(1,1)$ and $\theta(0,0)$ are already required as inputs to determine the outcome variances, they and the target NIEs together determine the NDE targets through $\zeta(a) = \theta(1,1)-\theta(0,0)-\delta(1-a)$, so the target NDE is not a separate design input here. The tabulated expressions hold under Assumption~\ref{seq ign}, the moment condition of Proposition~\ref{prop:oracle-rmpw-var}, and the working models of this subsection, whenever the calibration system admits a solution compatible with the specified inputs.

\begin{table}[H]
\centering
\caption{Design-stage inputs, working quantities, and variance components
for a binary outcome.}
\label{tab:binary-outcome-planning-inputs}
\footnotesize
\setlength{\tabcolsep}{2pt}
\renewcommand{\arraystretch}{1.16}
\begin{tabular}{@{}>{\raggedright\arraybackslash}m{0.19\linewidth}
>{\centering\arraybackslash}m{0.12\linewidth}
>{\raggedright\arraybackslash}m{0.45\linewidth}
>{\centering\arraybackslash}m{0.16\linewidth}@{}}
\toprule
Quantity & Notation & Definition or working-model expression
& Design-stage inputs \\
\midrule
\multicolumn{4}{@{}l}{\textit{Panel A. Design-stage inputs}} \\
Treatment allocation
& $\pi_a$
& $P(A=a)$, $a=0,1$
& (specified) \\
Mediator prevalence
& $q_a$
& $P\{M(a)=1\}$, $a=0,1$
& (specified) \\
Latent mediator $R^2$
& $R_M^2$
& proportion of within-arm latent-index variance explained by $W$
& (specified) \\
Target NIE
& $\delta_*$
& target value of $\delta(a)$; NIE planning
& (specified) \\
Marginal event probability
& $\theta(a,a)$
& $P\{Y(a,M(a))=1\}$
& (specified) \\
Outcome--score correlation
& $\kappa_a$
& $\operatorname{Corr}\{Y(a,M(a)),W\}$
& (specified) \\
\addlinespace
\multicolumn{4}{@{}l}{\textit{Panel B. Calibration and intermediate
working quantities}} \\
Latent-score variance
& $\sigma_W^2$
& $\displaystyle \frac{R_M^2}{1-R_M^2}$
& $R_M^2$ \\
Latent-score mean
& $\mu_W$
& $\displaystyle \frac{\Phi^{-1}(q_0)}{\sqrt{1-R_M^2}}$
& $q_0,R_M^2$ \\
Treatment shift
& $\Delta_M$
& $\displaystyle \frac{\Phi^{-1}(q_1)-\Phi^{-1}(q_0)}{\sqrt{1-R_M^2}}$
& $q_0,q_1,R_M^2$ \\
Mediator probability
& $p_a^W(w)$
& $\displaystyle \Phi(w+a\Delta_M)$
& $q_0,q_1,R_M^2$ \\
Mediator mass
& $g_a^W(m\mid w)$
& $\displaystyle \{p_a^W(w)\}^m\{1-p_a^W(w)\}^{1-m}$
& $q_0,q_1,R_M^2$ \\
Mediator ratio
& $R_{a,a'}^W(m,w)$
& $\displaystyle \frac{g_{a'}^W(m\mid w)}{g_a^W(m\mid w)}$
& $q_0,q_1,R_M^2$ \\
Probit coefficients
& $\alpha_a^\rho,b_a^\rho,\gamma_a^\rho$
& solved jointly from the calibration system in
Section~\ref{app:proof-binary-outcome-working} of the Supplementary Material
& $\displaystyle
\begin{gathered}
q_0,q_1,R_M^2,\theta(a,a),\\[-2pt]
\kappa_a,\delta_*
\end{gathered}$ \\
Outcome event probability
& $\rho_a(w,m)$
& $\displaystyle \Phi\{\alpha_a^\rho+b_a^\rho(w-\mu_W)
+\gamma_a^\rho(m-q_a)\}$
& $\displaystyle
\begin{gathered}
q_0,q_1,R_M^2,\theta(a,a),\\[-2pt]
\kappa_a,\delta_*
\end{gathered}$ \\
Mediation functional
& $\theta(a,a')$
& $\displaystyle E_W\!\left[\sum_{m=0}^1
g_{a'}^W(m\mid W)\rho_a(W,m)\right]$
& $\displaystyle
\begin{gathered}
q_0,q_1,R_M^2,\theta(a,a),\\[-2pt]
\kappa_a,\delta_*
\end{gathered}$ \\
Cross-arm functional
& $\theta(a,1-a)$
& $\displaystyle \theta(a,a)+(1-2a)\delta_*$
& $\theta(a,a),\delta_*$ \\
\addlinespace
\multicolumn{4}{@{}l}{\textit{Panel C. Variance components}} \\
Ratio second moment
& $d_{a,a'}$
& $\displaystyle 1+E_W\left[
\frac{\{p_{a'}^W(W)-p_a^W(W)\}^2}
{p_a^W(W)\{1-p_a^W(W)\}}
\right]$
& $q_0,q_1,R_M^2$ \\
Outcome second moment
& $B_{a,a'}$
& $\displaystyle \theta(a,a)\{1-\theta(a,a)\}
+\{\theta(a,a)-\theta(a,a')\}^2$
& $\theta(a,a),\delta_*$ \\
Ratio--outcome covariance
& $C_{a,a'}$
& \resizebox{\linewidth}{!}{$\displaystyle \{1-2\theta(a,a')\}E_W\!\left[
\sum_{m=0}^1g_a^W(m\mid W)
\left\{\{R_{a,a'}^W(m,W)\}^2-d_{a,a'}\right\}
\rho_a(W,m)\right]$}
& $\displaystyle
\begin{gathered}
q_0,q_1,R_M^2,\theta(a,a),\\[-2pt]
\kappa_a,\delta_*
\end{gathered}$ \\
Shared-arm covariance
& $K_a$
& $\displaystyle \theta(a,1-a)\{1-\theta(a,1-a)\}$
& $\theta(a,a),\delta_*$ \\
\bottomrule
\end{tabular}
\end{table}

\section{Numerical study} \label{sec:numerical}
We conduct simulations to assess whether the proposed design-stage sample
size calculation attains the nominal power for natural indirect and direct
effects, followed by a design-calibration example based on the ACTG175 trial.

All simulation scenarios use a randomized trial with fixed 1:1 treatment
allocation. For each estimand, its true value serves as the design alternative
$\tau_*(a)$ in Equation~\eqref{eq:wald-sample-size}, and the working-model
variance in Section~\ref{sec:binary-working-model} determines the analytic
sample size $N$ for 80\% power at a one-sided level $\alpha=0.05$. We estimate
power from 500 Monte Carlo replications at $N$ and obtain a simulation-based
benchmark $N_{\mathrm{sim}}$ by interpolating the simulated power curve.
Accordingly, $N/N_{\mathrm{sim}}$ close to one and empirical power close to
0.80 indicate good calibration.

Let $\mathbf X\in\mathbb R^{10}$ denote the baseline covariates. We consider
 two covariate designs with mutually independent components. In the Gaussian design, the components of $\mathbf X$
are standard normal variables. In the mixed design, $\mathbf X$
contains standardized Bernoulli, uniform, Poisson, gamma, and beta
components.
The binary mediator is generated from the probit model
\[
P(M=1\mid A=a,\mathbf X)
=
p_a(\mathbf X)
=
\Phi(W+a\Delta_M),
\qquad
W=\mu_W+\mathbf X^\top\beta_M,
\]
so that $W$ is the baseline-covariate summary of
Section~\ref{sec:binary-working-model}. The intercept $\mu_W$ and the
treatment shift $\Delta_M$ are calibrated so that $E\{p_0(\mathbf X)\}=q_0$
and $E\{p_1(\mathbf X)\}=q_1$, and the coefficient vector $\beta_M$ is
scaled so that
$
R_M^2
=
\frac{\operatorname{Var}(\mathbf X^\top\beta_M)}
{1+\operatorname{Var}(\mathbf X^\top\beta_M)}
$
attains its design value. Complete distributional specifications are provided in
Section~\ref{app:additional-oracle-results} of the Supplementary Material.
In the Gaussian design, $W$ is exactly normal, so
the mediator working model holds exactly; the mixed design departs from
normality and probes the Gaussian approximation to $W$. 
    
\subsection{Continuous outcome}
The continuous outcome is generated as
\[
Y=\eta_A+f_Y(\mathbf X)+\gamma_AM+\epsilon_A,
\qquad
f_Y(\mathbf X)=\mathbf X^\top\beta_Y,
\qquad
\epsilon_A\sim\mathcal{N}(0,\sigma_A^2).
\]
The coefficient vectors $\beta_M$ and $\beta_Y$ are chosen so that
$\operatorname{Corr}\{W,f_Y(\mathbf X)\}=\rho$. The design therefore allows
the baseline covariates to jointly predict the mediator and the outcome,
and it allows treatment--mediator interaction through
$\gamma_0\neq\gamma_1$. Under this data-generating mechanism,
$
\theta(a,a')
=
E\{\eta_a+f_Y(\mathbf X)+\gamma_ap_{a'}(\mathbf X)\}
=
\eta_a+\gamma_aq_{a'},
$
so the natural effects are $\delta(a)=\gamma_a(q_1-q_0)$ and
$\zeta(a)=\eta_1-\eta_0+(\gamma_1-\gamma_0)q_a$. We evaluate the four
natural-effect estimands $\delta(1)$, $\delta(0)$, $\zeta(1)$, and
$\zeta(0)$.

The baseline setting uses $\pi_1=0.5$, $q_0=0.35$, $q_1=0.55$, $R_M^2=0.25$,
$\rho=0.5$, $\eta_0=0$, $\eta_1-\eta_0=0.08$, $\gamma_0=\gamma_1=0.4$, and $\sigma_0=\sigma_1=1$. We consider a range of
simulation panels, each varying one design feature around this baseline.
Complete panel definitions and the corresponding mixed-covariate results
for both outcome types are provided in
Section~\ref{app:additional-oracle-results} of the Supplementary Material.

Table~\ref{tab:continuous-gaussian} reports the oracle-ratio results under
Gaussian covariates: for each estimand and panel setting, the analytic
$N$, the benchmark $N_{\mathrm{sim}}$, and the empirical power at $N$, with
both $N$ and $N_{\mathrm{sim}}$ computed using the oracle mediator
probability ratios. The analytic calculation is well calibrated across the continuous-outcome
settings. The results below are computed with the oracle mediator
probability ratios; the estimated-ratio analysis is reported afterward.
Under Gaussian covariates (Table~\ref{tab:continuous-gaussian}), the mean
empirical power at $N$ is 0.799, with values ranging from 0.766 to 0.846
across panels and estimands, and the average ratio $N/N_{\mathrm{sim}}$ is
0.997. Under mixed covariates (Table~\ref{tab:continuous-mixed} of the
Supplementary Material), the corresponding figures are 0.804, 0.750 to
0.842, and 1.002. In both designs the analytic formula therefore produces
sample sizes close to those obtained by direct simulation, and the close
agreement between the two designs shows that the accuracy of the
calculation does not depend on Gaussian covariates. 

\begin{table}[H]
\centering
\scriptsize
\caption{Oracle-ratio continuous-outcome simulations with Gaussian covariates.}
\label{tab:continuous-gaussian}
\setlength{\tabcolsep}{2pt}
\resizebox{\textwidth}{!}{%
\begin{tabular}{l*{12}{c}}
\simtableheader
Baseline & \simresult{437}{455}{0.786} & \simresult{477}{467}{0.802} & \simresult{5246}{5506}{0.794} & \simresult{5166}{5610}{0.790} \\
$\gamma_0=\gamma_1=0.2$ & \simresult{1757}{1786}{0.810} & \simresult{1916}{1916}{0.800} & \simresult{5066}{5116}{0.806} & \simresult{5013}{5123}{0.784} \\
$\gamma_0=\gamma_1=0.4$ & \simresult{436}{425}{0.804} & \simresult{476}{482}{0.794} & \simresult{5235}{5385}{0.772} & \simresult{5155}{5560}{0.796} \\
$\gamma_0=\gamma_1=0.6$ & \simresult{194}{186}{0.814} & \simresult{212}{214}{0.798} & \simresult{5514}{5622}{0.780} & \simresult{5395}{5395}{0.800} \\
$\eta_1-\eta_0=0.04$ & \simresult{436}{457}{0.774} & \simresult{475}{504}{0.790} & \simresult{20963}{20072}{0.794} & \simresult{20644}{20037}{0.794} \\
$\eta_1-\eta_0=0.08$ & \simresult{438}{439}{0.834} & \simresult{477}{486}{0.792} & \simresult{5246}{5222}{0.806} & \simresult{5166}{5246}{0.786} \\
$\eta_1-\eta_0=0.12$ & \simresult{437}{448}{0.794} & \simresult{476}{434}{0.818} & \simresult{2331}{2423}{0.784} & \simresult{2296}{2450}{0.788} \\
$\{q_0,q_1\}=\{0.40,0.60\}$ & \simresult{449}{419}{0.846} & \simresult{449}{425}{0.822} & \simresult{5195}{5084}{0.808} & \simresult{5195}{5077}{0.808} \\
$\{q_0,q_1\}=\{0.20,0.40\}$ & \simresult{462}{468}{0.796} & \simresult{701}{743}{0.780} & \simresult{5497}{5073}{0.808} & \simresult{5092}{4794}{0.832} \\
$\{q_0,q_1\}=\{0.60,0.80\}$ & \simresult{702}{672}{0.824} & \simresult{462}{437}{0.822} & \simresult{5086}{4930}{0.810} & \simresult{5491}{6011}{0.770} \\
$R_M^2=0$ & \simresult{363}{385}{0.772} & \simresult{394}{401}{0.794} & \simresult{5028}{4847}{0.798} & \simresult{4971}{4788}{0.818} \\
$R_M^2=0.25$ & \simresult{436}{431}{0.806} & \simresult{475}{503}{0.796} & \simresult{5240}{5241}{0.796} & \simresult{5160}{5221}{0.782} \\
$R_M^2=0.50$ & \simresult{586}{572}{0.810} & \simresult{642}{609}{0.826} & \simresult{5461}{5408}{0.766} & \simresult{5358}{5310}{0.804} \\
$\rho=0$ & \simresult{449}{444}{0.786} & \simresult{491}{488}{0.802} & \simresult{5129}{5081}{0.802} & \simresult{5062}{5099}{0.782} \\
$\rho=0.5$ & \simresult{436}{416}{0.826} & \simresult{475}{486}{0.788} & \simresult{5236}{5589}{0.772} & \simresult{5156}{5134}{0.802} \\
$\rho=0.8$ & \simresult{428}{433}{0.796} & \simresult{465}{479}{0.796} & \simresult{5309}{5239}{0.824} & \simresult{5222}{5137}{0.806} \\
$\{\gamma_0,\gamma_1\}=\{0.30,0.50\}$ & \simresult{279}{283}{0.796} & \simresult{848}{874}{0.810} & \simresult{933}{938}{0.798} & \simresult{1474}{1421}{0.826} \\
$\{\gamma_0,\gamma_1\}=\{0.20,0.60\}$ & \simresult{194}{203}{0.782} & \simresult{1915}{2053}{0.798} & \simresult{375}{378}{0.798} & \simresult{685}{734}{0.770} \\
\bottomrule
\end{tabular}
}
\end{table}

We perform two additional checks. First, we estimate the mediator probability ratios within each simulated trial using a probit model. Because the analytic variance does not include the first-order contribution of nuisance estimation, this estimated-ratio analysis serves as an empirical robustness check.
It gives a mean empirical power of 0.799 (range 0.748--0.846), pooled across covariate designs, panels, and estimands. Second, under null NIE and NDE scenarios, the type-I errors are 0.056 and 0.050 using oracle ratios and 0.051 and 0.046 using estimated ratios, all close to the nominal 0.05 level. Detailed results are reported in Section~\ref{app:estimated-ratio-results} of the Supplementary Material.
\subsection{Binary outcome}
The binary outcome is
generated from the probit model
\[
\begin{aligned}
P(Y=1\mid A=a,M=m,\mathbf{X})
&=
\mu_a(m,\mathbf{X})
=
\Phi\{\alpha_Y+\eta_a+f_Y(\mathbf{X})+\gamma_a m\}, 
\end{aligned}
\]
$f_Y(\mathbf{X})=\mathbf{X}^\top\beta_Y .$
The intercept $\alpha_Y$ is calibrated so that 
$\theta(0,0)$ equals the target risk $r_0$. Under this data-generating mechanism,
$
\theta(a,a')
=
E\left[
\mu_a(1,\mathbf{X})p_{a'}(\mathbf{X})+\mu_a(0,\mathbf{X})\{1-p_{a'}(\mathbf{X})\}
\right],
$
where $p_{a'}(\mathbf{X})=P(M=1\mid A=a',\mathbf{X})$. As in the continuous-outcome
study, we evaluate $\delta(1)$, $\delta(0)$, $\zeta(1)$, and $\zeta(0)$.
The baseline setting uses $\pi_1=0.5$, $q_0=0.35$, $q_1=0.55$,
$R_M^2=0.25$, $\rho=0.5$, $\eta_0=0$, $\eta_1=0.25$, $\gamma_0=\gamma_1=0.35$, and $r_0=0.30$. The simulation panels vary one design feature around this baseline as in the continuous case, panel definitions
are given in Section~\ref{app:additional-oracle-results} of the
Supplementary Material.

Table~\ref{tab:binary-gaussian} reports the oracle-ratio results under
Gaussian covariates: for each estimand and panel setting, the analytic
$N$, the benchmark $N_{\mathrm{sim}}$, and the empirical power at $N$, with
both $N$ and $N_{\mathrm{sim}}$ computed using the oracle mediator
probability ratios. The analytic calculation is also well calibrated for
binary outcomes. Under Gaussian covariates, the mean empirical power at
$N$ is 0.804, with values ranging from 0.764 to 0.844 across panels and
estimands, and the average ratio $N/N_{\mathrm{sim}}$ is 1.019. Under
mixed covariates (Table~\ref{tab:binary-mixed} of the Supplementary
Material), the corresponding figures are 0.798, 0.748 to 0.848, and 1.008.
The analytic sample sizes are thus close to the direct simulation
benchmarks, and the close agreement again shows
that the accuracy of the calculation does not depend on Gaussian
covariates.

\begin{table}[H]
\centering
\scriptsize
\caption{Oracle-ratio binary-outcome simulations with Gaussian covariates.}
\label{tab:binary-gaussian}
\setlength{\tabcolsep}{2pt}
\resizebox{\textwidth}{!}{%
\begin{tabular}{l*{12}{c}}
\simtableheader
Baseline & \simresult{923}{909}{0.808} & \simresult{1087}{1094}{0.786} & \simresult{885}{883}{0.808} & \simresult{879}{867}{0.810} \\
$\gamma_0=\gamma_1=0.20$ & \simresult{2851}{2968}{0.786} & \simresult{3302}{3085}{0.796} & \simresult{861}{841}{0.824} & \simresult{859}{823}{0.812} \\
$\gamma_0=\gamma_1=0.35$ & \simresult{920}{913}{0.796} & \simresult{1085}{1005}{0.830} & \simresult{883}{891}{0.796} & \simresult{877}{882}{0.784} \\
$\gamma_0=\gamma_1=0.50$ & \simresult{449}{445}{0.812} & \simresult{538}{538}{0.800} & \simresult{920}{871}{0.800} & \simresult{909}{870}{0.810} \\
$\eta_1-\eta_0=0.15$ & \simresult{938}{985}{0.780} & \simresult{1084}{1088}{0.828} & \simresult{2495}{2320}{0.766} & \simresult{2466}{2349}{0.818} \\
$\eta_1-\eta_0=0.25$ & \simresult{924}{885}{0.802} & \simresult{1088}{1041}{0.794} & \simresult{885}{844}{0.804} & \simresult{879}{827}{0.828} \\
$\eta_1-\eta_0=0.35$ & \simresult{910}{906}{0.808} & \simresult{1086}{1077}{0.804} & \simresult{445}{417}{0.796} & \simresult{443}{426}{0.820} \\
$\{q_0,q_1\}=\{0.40,0.60\}$ & \simresult{952}{1037}{0.786} & \simresult{1031}{1036}{0.810} & \simresult{876}{787}{0.818} & \simresult{881}{799}{0.816} \\
$\{q_0,q_1\}=\{0.20,0.40\}$ & \simresult{952}{941}{0.824} & \simresult{1549}{1638}{0.772} & \simresult{930}{905}{0.800} & \simresult{874}{857}{0.812} \\
$\{q_0,q_1\}=\{0.60,0.80\}$ & \simresult{1478}{1434}{0.814} & \simresult{1056}{1045}{0.832} & \simresult{855}{845}{0.816} & \simresult{917}{939}{0.798} \\
$R_M^2=0$ & \simresult{750}{723}{0.822} & \simresult{901}{874}{0.786} & \simresult{848}{856}{0.796} & \simresult{840}{841}{0.794} \\
$R_M^2=0.25$ & \simresult{920}{995}{0.772} & \simresult{1084}{1012}{0.800} & \simresult{884}{846}{0.780} & \simresult{878}{838}{0.804} \\
$R_M^2=0.50$ & \simresult{1234}{1092}{0.806} & \simresult{1461}{1460}{0.816} & \simresult{923}{895}{0.802} & \simresult{917}{895}{0.818} \\
$\rho=0$ & \simresult{923}{898}{0.806} & \simresult{1124}{1038}{0.844} & \simresult{865}{880}{0.770} & \simresult{856}{842}{0.806} \\
$\rho=0.5$ & \simresult{920}{875}{0.828} & \simresult{1085}{1075}{0.816} & \simresult{884}{910}{0.780} & \simresult{877}{884}{0.764} \\
$\rho=0.8$ & \simresult{917}{895}{0.818} & \simresult{1059}{1027}{0.786} & \simresult{897}{897}{0.800} & \simresult{892}{819}{0.792} \\
$\{\gamma_0,\gamma_1\}=\{0.25,0.45\}$ & \simresult{549}{578}{0.818} & \simresult{2118}{2132}{0.768} & \simresult{403}{406}{0.776} & \simresult{510}{486}{0.830} \\
$\{\gamma_0,\gamma_1\}=\{0.15,0.55\}$ & \simresult{364}{377}{0.788} & \simresult{5854}{6142}{0.802} & \simresult{229}{218}{0.832} & \simresult{335}{332}{0.814} \\
$r_0=0.15$ & \simresult{1122}{1117}{0.802} & \simresult{1477}{1436}{0.792} & \simresult{1122}{1116}{0.804} & \simresult{1123}{1109}{0.820} \\
$r_0=0.30$ & \simresult{923}{930}{0.798} & \simresult{1087}{1076}{0.822} & \simresult{885}{833}{0.786} & \simresult{879}{853}{0.806} \\
$r_0=0.50$ & \simresult{935}{905}{0.812} & \simresult{987}{1046}{0.790} & \simresult{849}{824}{0.808} & \simresult{836}{819}{0.834} \\
\bottomrule
\end{tabular}
}
\end{table}

The two additional checks give the same picture as in the continuous
outcome. Estimating the mediator probability ratios within each simulated
trial gives mean empirical power 0.800 (range 0.758 to 0.850) pooled across
covariate designs, panels, and estimands, nearly identical to the
oracle-ratio analysis. In null scenarios, the type-I errors of the Wald
test are 0.051 for the NIE and 0.052 for the NDE in the oracle-ratio
analysis, and 0.046 and 0.050 in the estimated-ratio analysis, all close
to the nominal 0.05 level. Detailed binary-outcome results are reported in
Section~\ref{app:estimated-ratio-results} of the Supplementary Material.

As a complementary comparison, we evaluate the power of the natural-effect
tests in trials sized for the ATE. In three representative continuous-outcome and three binary-outcome Gaussian-covariates settings, we compute the two-arm ATE sample size $N_{\mathrm{ATE}}$ for 80\% power and evaluate the tests in 1000 simulated trials of that size. Empirical ATE power ranges from 0.790 to 0.816, whereas power ranges from 0.249 to 1.000 for the NIE and from 0.199 to 0.655 for the NDE. Because ATE-based planning does not incorporate the mediator pathway, $N_{\mathrm{ATE}}$ may be smaller or larger than the sample size required for a prespecified natural effect. The complete comparison is reported in Section~\ref{app:ate-comparison} of the Supplementary Material.

\subsection{ACTG175 design-calibration example}
\label{sec:actg175}
We illustrate how pilot trial data can calibrate the design inputs of the
proposed calculation using the ACTG175 randomized trial, which compared zidovudine monotherapy with other antiretroviral therapies in
adults with HIV infection and baseline CD4 T-cell counts between 200 and 500
cells per cubic millimeter \citep{hammer1996trial}. The data are available through the \texttt{speff2trial} package \citep{juraska2022speff2trial}.

We define $A=0$ as zidovudine monotherapy and $A=1$ as the pooled other therapy arms. The primary mediator is the early immunologic response
$M=I(\Delta\mathrm{CD4}\ge50)$, where $\Delta\mathrm{CD4}$ is the change in
CD4 T-cell count from baseline to week 20. The 50-cell threshold is consistent
with previously used definitions of short-term immunologic response
\citep{bisson2006diagnostic,moore2009effect,boatman2019risk}; thresholds of 25 and 100 cells are considered in sensitivity
analyses. The outcome is an indicator of remaining free through week 96 of
CD4 decline of at least 50 cells, progression to AIDS, or death.
{We restrict the analysis to participants with observed mediator and outcome,
yielding a working pilot population of 1941 participants ($468$ in arm $0$ and
$1473$ in arm $1$).} At the 25-,
50-, and 100-cell thresholds, mediator prevalences are 0.314, 0.231, and
0.103 in arm 0 and 0.506, 0.428, and 0.258 in arm 1; the corresponding
arm-specific 96-week event-free proportions are 0.752 and 0.870.
{For illustration, we do not address the missingness in the mediator and the outcome;} these quantities are used only as descriptive pilot-calibration
inputs for illustrating how the proposed formulas operate and should not be
interpreted as causal NIE or NDE estimates for the full randomized cohort.
%These
%estimates and design calibrations pertain to participants who are event-free
%and observed at the 20-week landmark rather than to the full randomized cohort.

For each threshold, we fit a probit model for $P(M=1\mid A,\mathbf X)$ using
baseline covariates and use the resulting mediator probability ratios in the RMPW estimators. At the corresponding design-stage sample size $N$, we estimate power from 500 arm-stratified bootstrap resamples that preserve the observed allocation ratio. For each resample, we recompute the RMPW point estimate and apply the one-sided Wald-type test using the
corresponding design-stage variance. Additional details on the construction of the pilot population,
calibration of the design inputs, and resampling procedure are
provided in Section~\ref{app:actg175-calibration} of the Supplementary Material.

\begin{table}[H]
	\centering
	\caption{ACTG175 pilot-calibrated design quantities and fixed-ratio
		resampling results by mediator threshold.}
	\label{tab:actg175-sample-size}
	
	\small
	\setlength{\tabcolsep}{4pt}
	\renewcommand{\arraystretch}{1.08}
	
	\begin{tabular*}{\textwidth}
		{@{\extracolsep{\fill}}llcccc@{}}
		\toprule
		Mediator definition
		& Estimand
		& Pilot target
		& $V$
		& $N$
		& Resampling power \\
		\midrule
		
		$\Delta\mathrm{CD4}\ge 25$
		& $\delta(1)$ & 0.0293 & 0.0251 & 181  & 0.812 (0.017) \\
		& $\delta(0)$ & 0.0373 & 0.1054 & 469  & 0.780 (0.019) \\
		& $\zeta(1)$  & 0.0809 & 0.8598 & 813  & 0.788 (0.018) \\
		& $\zeta(0)$  & 0.0889 & 1.0019 & 784  & 0.786 (0.018) \\
		
		\addlinespace[3pt]
		$\Delta\mathrm{CD4}\ge 50$
		& $\delta(1)$ & 0.0285 & 0.0229 & 174  & 0.790 (0.018) \\
		& $\delta(0)$ & 0.0333 & 0.1260 & 705  & 0.778 (0.019) \\
		& $\zeta(1)$  & 0.0849 & 0.8997 & 771  & 0.788 (0.018) \\
		& $\zeta(0)$  & 0.0897 & 0.9983 & 768  & 0.800 (0.018) \\
		
		\addlinespace[3pt]
		$\Delta\mathrm{CD4}\ge 100$
		& $\delta(1)$ & 0.0191 & 0.0121 & 207  & 0.780 (0.019) \\
		& $\delta(0)$ & 0.0259 & 0.1275 & 1174 & 0.816 (0.017) \\
		& $\zeta(1)$  & 0.0923 & 0.9354 & 680  & 0.814 (0.017) \\
		& $\zeta(0)$  & 0.0991 & 0.9703 & 611  & 0.788 (0.018) \\
		
		\bottomrule
	\end{tabular*}
	
	\vspace{3pt}
	\parbox{\textwidth}{\footnotesize
		Pilot target is the normalized RMPW estimate obtained from the full pilot
		sample. $V$ is the model-implied design-stage variance, and $N$ is the
		sample size for 80\% power under a one-sided level-0.05 test. Resampling power and its
		Monte Carlo standard error are based on 500 arm-preserving resamples at
		$N$. The 50-cell threshold is the primary
		mediator definition.
	}
	\label{main:lastpage}
\end{table}

Table~\ref{tab:actg175-sample-size} reports the pilot-calibrated targets, design-stage variances, sample sizes,
and resampling power. At the primary 50-cell
threshold,  the pilot targets are 0.029 for $\delta(1)$, 0.033 for
$\delta(0)$, 0.085 for $\zeta(1)$, and 0.090 for $\zeta(0)$. Despite the
similar NIE targets, $\delta(1)$ and $\delta(0)$ require markedly different
sample sizes, 174 and 705, because their design-stage variances are 0.0229
and 0.1260.  This contrast reflects differences in the effect-specific
treatment arm, mediator overlap, outcome variation, and shared-arm
covariance. At the 100-cell threshold, the required sample size for
$\delta(0)$ increases to 1174, whereas those for $\zeta(1)$ and $\zeta(0)$
decrease to 680 and 611. Across all 12 threshold--estimand combinations,
resampling power ranges from 0.778 to 0.816, close to the target of 0.80.

\begin{figure}[htbp]
\centering
\includegraphics[width=\textwidth]{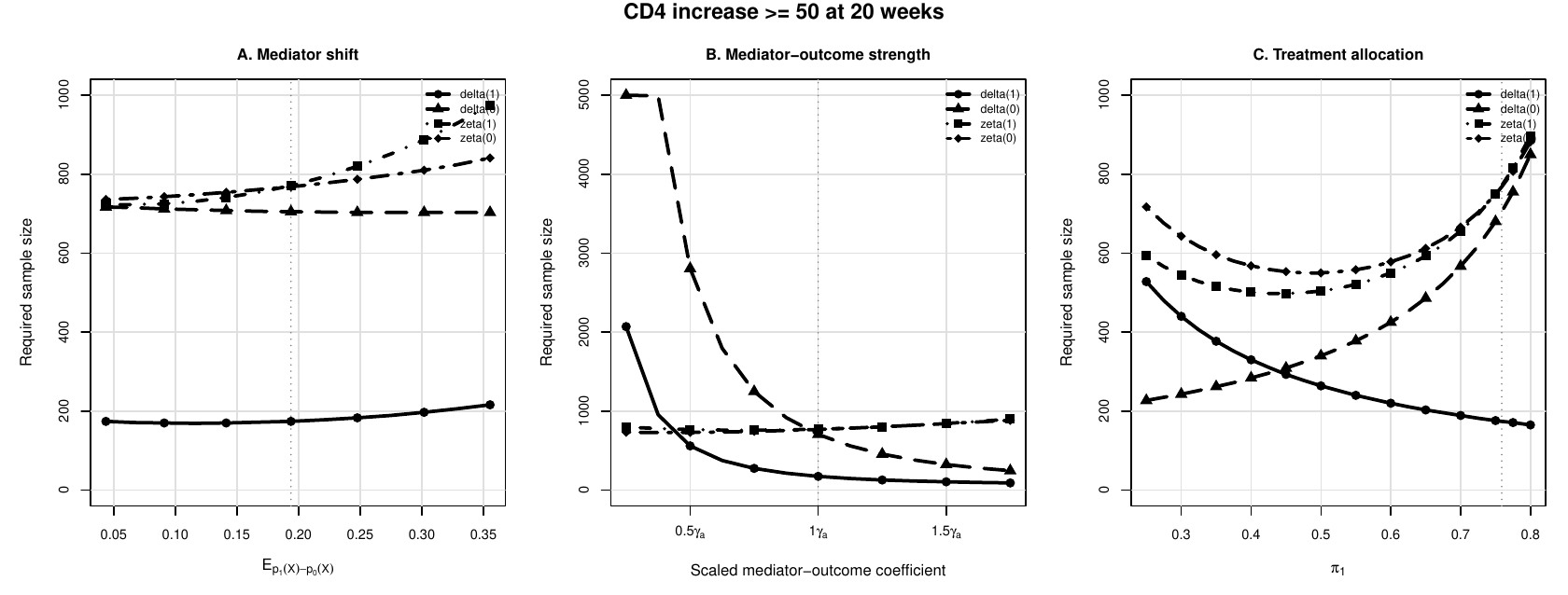}
\caption{ACTG175-calibrated working-design sensitivity for the primary mediator definition $M=1\{\Delta\mathrm{CD4}\ge 50\}$. The panels vary the latent-probit treatment-to-mediator shift, jointly scale the arm-specific mediator--outcome coefficients $\gamma_a^\rho$, and vary the treatment allocation while holding the remaining pilot-calibrated quantities fixed. Vertical dotted lines indicate the baseline values.}
\label{fig:actg175-design-sensitivity}
\end{figure}

Figure~\ref{fig:actg175-design-sensitivity} shows how the sample size required for each natural effect responds to changes in the design inputs. Starting from the working laws calibrated at the primary threshold, we vary one input at a time, holding the others fixed, and recompute the natural effects, design-stage variances, and required sample sizes.  The inputs varied
are the latent treatment--mediator shift $\Delta_M$, the mediator--outcome
pathway coefficients $\gamma_a^\rho$ scaled jointly across arms, and the
treatment allocation $\pi_1$. Each panel plots $N$ against one input, with
the pilot-calibrated value indicated by a vertical dotted line. The required
sample size is most sensitive to the mediator--outcome pathway: as
$\gamma_a^\rho$ decreases, the variance-to-squared-effect ratio increases,
leading to a particularly sharp increase in the sample sizes for the indirect
effects. Changes in $\Delta_M$ have more moderate, estimand-specific effects.
The allocation $\pi_1$ also produces different patterns because the four
estimators draw information differently from the two treatment arms; hence,
the allocation minimizing $N$ need not be the same across natural effects.

\section{Conclusion}\label{sec:conclusion}
This paper develops power and sample size calculation methods for causal
mediation analysis with a binary mediator in randomized trials, using the RMPW estimators. The variance decomposition of
Section~\ref{sec:variance-decomposition} makes explicit how the required
sample size depends on treatment allocation, mediator prevalence and
overlap, outcome variation, the mediator--outcome pathway, and the
outcome--covariate association, and the working models of
Section~\ref{sec:binary-working-model} reduce these to a small set of
design-stage inputs for continuous and binary outcomes. For a binary
outcome, the marginal event probabilities replace the outcome variances as
inputs, and the direct effects are then determined by the event
probabilities and the indirect effects. Simulation studies showed that the analytic calculations are well
calibrated for both outcome types and under both Gaussian and non-Gaussian
covariates, with similar empirical power under oracle and estimated
mediator probability ratios.
The ACTG175 example illustrated how pilot data can calibrate the design
inputs and showed that the required sample size, and the allocation that
minimizes it, differ across natural effects even at similar effect sizes. Future work could extend the design calculations to alternative causal
mediation estimands, such as interventional effects, and to settings with multivalued or longitudinal mediators.

\FloatBarrier

\bibliographystyle{plainnat}
\bibliography{bibl1.bib}

\clearpage
\section*{Supplementary Material for ``Power and sample size calculations for causal mediation analysis with a binary mediator in randomized trials"}
The Supplementary Material is organized as follows. Section~\ref{app:main-text-proofs}
provides proofs and derivations for the main-text results, including the
oracle RMPW variance decomposition, a Lyapunov central limit theorem, and
the continuous- and binary-outcome design-stage variance formulas.
Section~\ref{app:additional-oracle-results} describes the complete simulation
settings and reports additional oracle-ratio results.
Section~\ref{app:estimated-ratio-results} presents simulation results when the
mediator probability ratios are estimated within each simulated trial.
Section~\ref{app:ate-comparison} compares natural-effect-based planning with
conventional total-effect-based planning. Section~\ref{app:actg175-calibration} documents the construction of the
ACTG175 pilot population, the calibration of the design inputs,
and the resampling procedure used in the design illustration. Finally, Section~\ref{app:continuous-regression} develops the continuous-mediator
regression working model and provides the corresponding theorem and
corollary proofs.

\appendix
\setcounter{secnumdepth}{3}
\setcounter{section}{0}
\renewcommand{\thesection}{S\arabic{section}}
\renewcommand{\theHsection}{S\arabic{section}}
\renewcommand{\thesubsection}{\thesection.\arabic{subsection}}
\renewcommand{\thesubsubsection}{\thesubsection.\arabic{subsubsection}}
\renewcommand{\theequation}{\thesection.\arabic{equation}}
\renewcommand{\theHequation}{S\arabic{section}.\arabic{equation}}
\setcounter{table}{0}
\renewcommand{\thetable}{S\arabic{table}}
\renewcommand{\theHtable}{S\arabic{table}}
\setcounter{figure}{0}
\renewcommand{\thefigure}{S\arabic{figure}}
\renewcommand{\theHfigure}{S\arabic{figure}}
\setcounter{lemma}{0}
\renewcommand{\thelemma}{S\arabic{lemma}}
\renewcommand{\theHlemma}{S\arabic{lemma}}
\setcounter{theorem}{0}
\renewcommand{\thetheorem}{S\arabic{theorem}}
\renewcommand{\theHtheorem}{S\arabic{theorem}}
\setcounter{corollary}{0}
\renewcommand{\thecorollary}{S\arabic{corollary}}
\renewcommand{\theHcorollary}{S\arabic{corollary}}
\section{Proofs for main-text theoretical results}
\label{app:main-text-proofs}

This section collects an auxiliary normal-approximation result and proof details
deferred from the main text. The proof of the oracle variance decomposition is
followed by a Lyapunov central limit theorem that motivates the Gaussian
baseline-score working approximation, the derivation of the continuous-outcome and binary outcome
design-stage variances, in the
order in which these results appear in
Section~\ref{sec:binary-working-model}.

\subsection{Proof of the oracle RMPW variance decomposition}
\label{app:proof-oracle-rmpw-placeholder}

\begin{proof}[Proof of Proposition~\ref{prop:oracle-rmpw-var} and the
Section~\ref{sec:variance-decomposition} contrast formulas]
Let
\[
S_N(a,a')=N^{-1}\sum_{i=1}^N
I(A_i=a)R_{a,a'}(M_i,\mathbf X_i)Y_i
\]
and
\[
D_N(a,a')=N^{-1}\sum_{i=1}^N
I(A_i=a)R_{a,a'}(M_i,\mathbf X_i).
\]
By identification and the definition of the mediator probability ratio,
\[
E\{S_N(a,a')\}=\pi_a\theta(a,a'),
\qquad
E\{D_N(a,a')\}=\pi_a .
\]
A first-order Taylor expansion of $S_N(a,a')/D_N(a,a')$ around these
expectations gives
\[
\sqrt N\{\hat\theta(a,a')-\theta(a,a')\}
=
\frac{1}{\sqrt N}\sum_{i=1}^N
\frac{I(A_i=a)}{\pi_a}R_{a,a'}(M_i,\mathbf X_i)
\{Y_i-\theta(a,a')\}+o_p(1).
\]
The assumed second moment condition and the central limit theorem then yield
Proposition~\ref{prop:oracle-rmpw-var} and the variance
\eqref{eq:rmpw-theta-var}.

The design-effect decomposition \eqref{eq:theta-design-effect-decomp} follows
from
\[
E(R^2U^2\mid A=a)
=
E(R^2\mid A=a)E(U^2\mid A=a)
+
\operatorname{Cov}(R^2,U^2\mid A=a),
\]
with $R=R_{a,a'}(M,\mathbf X)$ and
$U=Y-\theta(a,a')$. The identity
$d_{a,a'}=1+\operatorname{Var}\{R_{a,a'}(M,\mathbf X)\mid A=a\}$ follows
from $E\{R_{a,a'}(M,\mathbf X)\mid A=a\}=1$. For a binary mediator,
conditioning on $\mathbf X$ gives
\[
E\{R_{a,a'}^2(M,\mathbf X)\mid \mathbf X,A=a\}
=
\frac{p_{a'}^2(\mathbf X)}{p_a(\mathbf X)}
+
\frac{\{1-p_{a'}(\mathbf X)\}^2}{1-p_a(\mathbf X)},
\]
and averaging over $\mathbf X$ gives the displayed expression for
$d_{a,a'}$ in the main text.

For the direct effect, the influence functions for $\theta(1,a)$ and
$\theta(0,a)$ are supported on disjoint treatment arms. Their covariance is
therefore zero, and substituting the arm-specific variance decomposition gives
\eqref{eq:direct-design-effect-var}. For the indirect effect, both components
are estimated within arm $a$, so the influence-function contrast is
\[
\frac{I(A=a)}{\pi_a}
\left[
R_{a,1}(M,\mathbf X)\{Y-\theta(a,1)\}
-
R_{a,0}(M,\mathbf X)\{Y-\theta(a,0)\}
\right].
\]
Taking its second moment gives \eqref{eq:indirect-general-var}. Expanding the
square and applying the same decomposition to the two diagonal terms gives
\eqref{eq:indirect-design-effect-var}.

To establish the nonnegativity of $K_a$, note that $R_{a,a}(M,\mathbf X)=1$
and, for $a'\in\{0,1\}$,
\begin{equation*}
E\left[
R_{a,a'}(M,\mathbf X)\{Y-\theta(a,a')\}
\mid A=a
\right]=0.
\end{equation*}
Using the symmetry of covariance, it follows that
\begin{equation*}
\begin{aligned}
K_a
&=
E\left[
R_{a,1-a}(M,\mathbf X)
\{Y-\theta(a,1-a)\}
\{Y-\theta(a,a)\}
\mid A=a
\right]\\
&=
E\left[
R_{a,1-a}(M,\mathbf X)
\{Y-\theta(a,1-a)\}^2
\mid A=a
\right]\\
&\quad+
\{\theta(a,1-a)-\theta(a,a)\}
E\left[
R_{a,1-a}(M,\mathbf X)
\{Y-\theta(a,1-a)\}
\mid A=a
\right]\\
&=
E\left[
R_{a,1-a}(M,\mathbf X)
\{Y-\theta(a,1-a)\}^2
\mid A=a
\right]
\geq 0,
\end{aligned}
\end{equation*}
because the mediator probability ratio is nonnegative.

Finally, suppose that for the target contrast $\tau(a)$,
\[
\sqrt N\{\hat\tau(a)-\tau(a)\}
\stackrel{d}{\to}N\{0,V_{\tau_*}(a)\}
\]
under the design alternative $\tau(a)=\tau_*(a)$. For a one-sided test in the
prespecified direction, the Wald statistic is approximately normal with unit
variance and mean $\sqrt N|\tau_*(a)|/\sqrt{V_{\tau_*}(a)}$. Thus the approximate
power is
\[
\Phi\left(
\frac{\sqrt N|\tau_*(a)|}{\sqrt{V_{\tau_*}(a)}}-z_{1-\alpha}
\right).
\]
Equating this expression to $1-\beta$ and rounding upward yields
\eqref{eq:wald-sample-size}.
\end{proof}

\subsection{A Lyapunov central limit theorem for baseline-covariate scores}
\label{app:lyapunov-score}

The following result restates Theorem~1 of \citet{liu2026sample}. It gives a
sufficient Lyapunov condition under which a weighted covariate score is
asymptotically Gaussian.
\begin{lemma}
\label{lem:lyapunov-score}
Let $X_1,X_2,\ldots$ be independent random variables, each with mean 0 and
variance 1. Let $\beta_1,\beta_2,\ldots$ be real coefficients. Suppose that
there exists a positive constant $B<\infty$ such that $E(X_j^4)<B$ for every
$j$, and
\begin{equation*}
\frac{
\max_{1\leq j\leq p}\beta_j^2
}{
\sum_{j=1}^p\beta_j^2
}
=
o(1).
\end{equation*}
Then, as $p\rightarrow\infty$,
\begin{equation*}
\frac{
\sum_{j=1}^p\beta_jX_j
}{
\left(\sum_{j=1}^p\beta_j^2\right)^{1/2}
}
\overset{d}{\longrightarrow}
N(0,1).
\end{equation*}
\end{lemma}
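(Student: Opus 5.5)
We will derive the lemma from the Lindeberg--Feller central limit theorem for triangular arrays, verified through the Lyapunov condition with fourth moments. Fix $p$ and set $s_p^2=\sum_{j=1}^p\beta_j^2$ and $\xi_{p,j}=\beta_jX_j/s_p$ for $j=1,\ldots,p$. The $\xi_{p,j}$ are independent within row $p$ and have mean zero. Their variances are $\beta_j^2/s_p^2$, which sum to one. The target statistic is the row sum $\sum_{j}\xi_{p,j}$, so the claim is exactly the triangular-array CLT for this row sum. Note that the coefficients may depend on $p$; the triangular-array formulation handles this.

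We will verify the Lyapunov condition with exponent $\delta=2$, namely $\sum_{j=1}^p E|\xi_{p,j}|^4\to0$. By independence this sum equals $s_p^{-4}\sum_j\beta_j^4E(X_j^4)$. The uniform bound $E(X_j^4)<B$ gives the estimate
\[
\sum_{j=1}^p E|\xi_{p,j}|^4\le \frac{B\sum_j\beta_j^4}{s_p^4}\le \frac{B\,\max_j\beta_j^2\sum_j\beta_j^2}{s_p^4}=B\,\frac{\max_{1\le j\le p}\beta_j^2}{\sum_{j=1}^p\beta_j^2}.
\]
By the hypothesis of the lemma, the right-hand side is $o(1)$. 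The Lyapunov condition therefore holds, and the Lyapunov CLT yields $\sum_j\xi_{p,j}\overset{d}{\to}N(0,1)$.

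The only delicate point is the bound $\sum_j\beta_j^4\le\max_j\beta_j^2\sum_j\beta_j^2$. This step is where the "no dominant covariate" condition enters, and it is elementary. If one prefers to avoid quoting Lyapunov, Lindeberg's condition follows directly as well. For $\epsilon>0$, Markov's inequality gives $E[\xi^2 I(|\xi|>\epsilon)]\le E\xi^4/\epsilon^2$, so Lindeberg's condition reduces to the same fourth-moment bound. We do not expect any substantive obstacle.
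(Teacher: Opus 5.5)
Your proof is correct. The paper gives no proof of its own; it restates the lemma as Theorem~1 of Liu et al.\ (2026) and frames it as a Lyapunov central limit theorem, and your fourth-moment Lyapunov verification via $\sum_j\beta_j^4\le\max_j\beta_j^2\sum_j\beta_j^2$ is the standard argument behind that citation.
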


A corresponding multivariate approximation can be obtained by applying the
same Lyapunov argument to every fixed linear combination and invoking the
Cram\'er--Wold device, provided that the limiting covariance matrix exists.
Here these results motivate the Gaussian working approximations; they are not
used as identification assumptions.

\subsection{Derivation of the continuous-outcome design-stage variances}
\label{app:continuous-outcome-working-derivation}

To establish the parameterization in
\eqref{eq:binary-score-parameters}, recall that
$W\sim N(\mu_W,\sigma_W^2)$ and
$\epsilon_M\sim N(0,1)$ are independent. Therefore,
\[
W+a\Delta_M+\epsilon_M
\sim
N(\mu_W+a\Delta_M,1+\sigma_W^2),
\]
and hence
\[
q_a
=
P\{M(a)=1\}
=
P(W+a\Delta_M+\epsilon_M>0)
=
\Phi\left(
\frac{\mu_W+a\Delta_M}{\sqrt{1+\sigma_W^2}}
\right).
\]
Writing $z_a=\Phi^{-1}(q_a)$ gives
\[
\mu_W+a\Delta_M=\sqrt{1+\sigma_W^2}\,z_a.
\]
Moreover,
\[
R_M^2=\frac{\sigma_W^2}{1+\sigma_W^2}
\quad\Longrightarrow\quad
\sigma_W^2=\frac{R_M^2}{1-R_M^2},
\qquad
\sqrt{1+\sigma_W^2}=\frac{1}{\sqrt{1-R_M^2}}.
\]
Setting $a=0$ and then subtracting the equation for $a=0$ from that for
$a=1$ yields the expressions for $\mu_W$ and $\Delta_M$ in
\eqref{eq:binary-score-parameters}.

Next,
\[
\xi_a=\operatorname{Cov}\{\Phi(W+a\Delta_M),W\}.
\]
Stein's identity gives
\[
\xi_a=\sigma_W^2E\{\phi(W+a\Delta_M)\}.
\]
Because $W+a\Delta_M$ is normal with mean $\mu_W+a\Delta_M$ and variance
$\sigma_W^2$,
\[
E\{\phi(W+a\Delta_M)\}
=
\frac{1}{\sqrt{1+\sigma_W^2}}
\phi\left(
\frac{\mu_W+a\Delta_M}{\sqrt{1+\sigma_W^2}}
\right),
\]
Substituting \eqref{eq:binary-score-parameters} yields
\[
\xi_a
=
\frac{R_M^2}{\sqrt{1-R_M^2}}
\phi\{\Phi^{-1}(q_a)\}.
\]

The formulas for $b_a$ and $\sigma_{v,a}^2$ follow by matching the two
design inputs $S_a^2$ and $\kappa_a$. Under
\eqref{eq:continuous-outcome-projection} with $m=M(a)$,
\[
Y(a,M(a))
=
E\{f_a(\mathbf X)\}+\gamma_aq_a
+b_a(W-\mu_W)+\gamma_a\{M(a)-q_a\}+v_a .
\]
Because $E\{v_a\mid W,M(a)\}=0$, the residual is uncorrelated with both
$W$ and $M(a)$. Hence
\[
\operatorname{Cov}\{Y(a,M(a)),W\}
=
b_a\sigma_W^2+\gamma_a\xi_a .
\]
On the other hand,
\[
\operatorname{Cov}\{Y(a,M(a)),W\}
=
\kappa_a S_a\sigma_W .
\]
For $R_M^2>0$, equating the two expressions gives
\[
b_a
=
\frac{\kappa_a S_a\sigma_W-\gamma_a\xi_a}{\sigma_W^2}.
\]
Similarly, using
$\operatorname{Var}\{M(a)\}=q_a(1-q_a)$,
$\operatorname{Cov}\{M(a),W\}=\xi_a$, and
$E\{v_a^2\mid W,M(a)\}=\sigma_{v,a}^2$, we obtain
\[
\begin{split}
S_a^2
&=
\operatorname{Var}\{Y(a,M(a))\}
\\
&=
b_a^2\sigma_W^2
+
\gamma_a^2 q_a(1-q_a)
+
2b_a\gamma_a\xi_a
+
\sigma_{v,a}^2 .
\end{split}
\]
Solving for the residual variance yields
\[
\sigma_{v,a}^2
=
S_a^2
-b_a^2\sigma_W^2
-\gamma_a^2 q_a(1-q_a)
-2b_a\gamma_a\xi_a .
\]
When $R_M^2=0$, $W$ is degenerate, $\kappa_a$ is not required, and the
projection term $b_a(W-\mu_W)$ is omitted. Thus $\xi_a=0$ and
\[
\sigma_{v,a}^2
=
S_a^2-\gamma_a^2q_a(1-q_a).
\]

Under the working projection \eqref{eq:continuous-outcome-projection},
\[
Y(a,m)-\theta(a,a')=h_{a,a'}(m,W)+v_a.
\]
The assumptions on $v_a$ imply the conditional second moment
\[
E\left[
\{Y(a,m)-\theta(a,a')\}^2\mid W,M(a)=m
\right]
=
h_{a,a'}^2(m,W)+\sigma_{v,a}^2.
\]
Because
$h_{a,a'}(m,W)=h_{a,a}(m,W)+\gamma_a(q_a-q_{a'})$,
$E\{h_{a,a}(M,W)\mid A=a\}=0$, and
$S_a^2=E\{h_{a,a}^2(M,W)\mid A=a\}+\sigma_{v,a}^2$, it follows that
\[
B_{a,a'}
=
S_a^2+\gamma_a^2(q_a-q_{a'})^2.
\]
Moreover, since $R_{a,a'}^W(M,W)$ is determined by $(W,M)$,
\[
\begin{aligned}
C_{a,a'}
&=
E_W\left[
\sum_{m=0}^1g_a^W(m\mid W)
\left\{\{R_{a,a'}^W(m,W)\}^2-d_{a,a'}\right\}
\left\{h_{a,a'}^2(m,W)+\sigma_{v,a}^2\right\}
\right]\\
&=
E_W\left[
\sum_{m=0}^1g_a^W(m\mid W)
\left\{\{R_{a,a'}^W(m,W)\}^2-d_{a,a'}\right\}
h_{a,a'}^2(m,W)
\right].
\end{aligned}
\]
Here conditional mean zero removes the cross term involving $v_a$, while
conditional homoscedasticity makes its second-moment contribution constant;
the latter cancels because
$E[\{R_{a,a'}^W(M,W)\}^2-d_{a,a'}\mid A=a]=0$. This gives
\eqref{eq:continuous-C}.

The same conditional-moment calculation gives
\[
K_a
=
E_W\left[
\sum_{m=0}^1g_a^W(m\mid W)R_{a,1-a}^W(m,W)
\left\{
h_{a,1-a}(m,W)h_{a,a}(m,W)+\sigma_{v,a}^2
\right\}
\right].
\]
Since $g_a^WR_{a,1-a}^W=g_{1-a}^W$ and
$h_{a,a}=h_{a,1-a}+\gamma_a(q_{1-a}-q_a)$, the last display also equals
\[
E_W\left[
\sum_{m=0}^1g_{1-a}^W(m\mid W)
\{h_{a,1-a}^2(m,W)+\sigma_{v,a}^2\}
\right]
\geq0,
\]
because the term linear in $h_{a,1-a}$ has mean zero under
$g_{1-a}^W$.

Substituting the conditional second moment above into
\eqref{eq:rmpw-theta-var} gives
the mediation-mean design-stage variance. Finally,
$R_{a,a}=1$ implies $d_{a,a}=1$, $C_{a,a}=0$, and
$B_{a,a}=S_a^2$. Substitution into
\eqref{eq:direct-design-effect-var} and
\eqref{eq:indirect-design-effect-var}, together with the displayed expression
for $K_a$, gives the NDE and NIE design-stage variances.

\subsection{Derivation of the binary-outcome design-stage variances}
\label{app:proof-binary-outcome-working}

\textit{Latent-scale reduction and calibration.}
For $R_M^2>0$, the population projection residual is
\[
v_a^*
=
f_a^*(\mathbf X)-E\{f_a^*(\mathbf X)\}
-\widetilde b_a(W-\mu_W).
\]
Joint normality of $f_a^*(\mathbf X)$ and $W$ implies that $v_a^*$ is
independent of $W$. Because the latent mediator error is independent of
$\mathbf X$, the mediator model further implies that $v_a^*$ is independent
of $M(a)$ conditional on $W$. The latent outcome error satisfies
\[
\epsilon_a^*\sim\mathcal N(0,1),
\qquad
\epsilon_a^*\indep\{\mathbf X,M(a)\}.
\]
Because $v_a^*$ is Gaussian and independent of $M(a)$ conditional on $W$,
it follows that
\[
u_a^*\mid\{W,M(a)\}
\sim
\mathcal N(0,\sigma_{u,a}^2),
\qquad
\sigma_{u,a}^2
=
\operatorname{Var}(v_a^*)+1.
\]
Thresholding the latent
outcome gives
\[
\begin{split}
P(Y=1\mid A=a,M=m,W=w)
&=
\Phi\left\{
\frac{E\{f_a^*(\mathbf X)\}}{\sigma_{u,a}}
+
\frac{\widetilde b_a}{\sigma_{u,a}}(w-\mu_W)
+
\frac{\gamma_a^*}{\sigma_{u,a}}(m-q_a)
\right\}
\\
&=
\rho_a(w,m).
\end{split}
\]
When $R_M^2=0$, $W$ is degenerate and the projection term is omitted.

For $R_M^2>0$, the three standardized quantities
$E\{f_a^*(\mathbf X)\}/\sigma_{u,a}$, $b_a^\rho$, and $\gamma_a^\rho$
are jointly calibrated by solving
\[
\begin{aligned}
E_W\left[
\sum_{m=0}^1g_a^W(m\mid W)\rho_a(W,m)
\right]
&=
\theta(a,a),
\\
\frac{
E_W\left[
(W-\mu_W)\sum_{m=0}^1g_a^W(m\mid W)\rho_a(W,m)
\right]
}{
\sigma_W[\theta(a,a)\{1-\theta(a,a)\}]^{1/2}
}
&=
\kappa_a,
\\
E_W\left[
\{p_1^W(W)-p_0^W(W)\}
\{\rho_a(W,1)-\rho_a(W,0)\}
\right]
&=
\delta(a).
\end{aligned}
\]
These one-dimensional expectations can be evaluated by numerical Gaussian
quadrature, and the equations can then be solved by standard nonlinear
root-finding whenever the specified inputs admit a compatible solution.
When $R_M^2=0$, the second equation and the term
$b_a^\rho(W-\mu_W)$ are omitted. If $q_1=q_0$, the mediator distributions
coincide and the third equation requires $\delta(a)=0$.

For the two-arm NDE calculation used in the numerical study, the working-
variance inputs are the two observed-arm event probabilities, outcome--$W$
correlations, and arm-specific NIEs. The two probit laws are calibrated
separately using $\theta(a,a)$, $\kappa_a$, and $\delta(a)$ in the system
above. The cross-arm mediation means and NDEs then follow from
\[
\begin{aligned}
\theta(0,1)&=\theta(0,0)+\delta(0),
&\qquad
\theta(1,0)&=\theta(1,1)-\delta(1),
\\
\zeta(0)&=\theta(1,0)-\theta(0,0),
&
\zeta(1)&=\theta(1,1)-\theta(0,1).
\end{aligned}
\]
Thus, the NDE is derived after calibration rather than imposed as an
additional calibration target. All implied mediation probabilities must lie
in $[0,1]$, and both numerical systems must admit compatible solutions.

\textit{Componentwise calculations.}
Conditional on $(W=w,M=m,A=a)$, $Y$ is Bernoulli with event probability
$\rho_a(w,m)$. Hence
\[
\begin{split}
&E\left[
\{Y-\theta(a,a')\}^2
\mid W=w,M=m,A=a
\right]
\\
&\qquad=
\{\rho_a(w,m)-\theta(a,a')\}^2
+
\rho_a(w,m)\{1-\rho_a(w,m)\}
\\
&\qquad=
\theta^2(a,a')
+
\{1-2\theta(a,a')\}\rho_a(w,m).
\end{split}
\]
Averaging over the mediator distribution in arm $a$ gives
\[
\begin{split}
B_{a,a'}
&=
\theta^2(a,a')
+
\{1-2\theta(a,a')\}\theta(a,a)
\\
&=
\theta(a,a)\{1-\theta(a,a)\}
+
\{\theta(a,a)-\theta(a,a')\}^2,
\end{split}
\]
which proves \eqref{eq:binary-B}.

Because
$E[\{R_{a,a'}^W(M,W)\}^2-d_{a,a'}\mid A=a]=0$, the covariance component is
\[
\begin{split}
C_{a,a'}
&=
E_W\Bigg[
\sum_{m=0}^1
g_a^W(m\mid W)
\left\{
\{R_{a,a'}^W(m,W)\}^2-d_{a,a'}
\right\}
\\
&\hspace{30mm}\times
E\left[
\{Y-\theta(a,a')\}^2
\mid W,M=m,A=a
\right]
\Bigg]
\\
&=
\{1-2\theta(a,a')\}
E_W\Bigg[
\sum_{m=0}^1
g_a^W(m\mid W)
\left\{
\{R_{a,a'}^W(m,W)\}^2-d_{a,a'}
\right\}
\rho_a(W,m)
\Bigg],
\end{split}
\]
which proves \eqref{eq:binary-C}; the constant term $\theta^2(a,a')$
vanishes because the centered squared ratio has mean zero.

For the shared-arm component, the residual-centering identity established in
Section~\ref{sec:variance-decomposition} gives
\[
\begin{split}
K_a
&=
E\left[
R_{a,1-a}^W(M,W)
\{Y-\theta(a,1-a)\}
\{Y-\theta(a,a)\}
\mid A=a
\right]
\\
&=
E\left[
R_{a,1-a}^W(M,W)
\{Y-\theta(a,1-a)\}^2
\mid A=a
\right].
\end{split}
\]
Since
$g_a^W(m\mid W)R_{a,1-a}^W(m,W)=g_{1-a}^W(m\mid W)$,
\[
\begin{split}
K_a
=
E_W\Bigg[
\sum_{m=0}^1
g_{1-a}^W(m\mid W)
\Big[&
\{\rho_a(W,m)-\theta(a,1-a)\}^2
+
\rho_a(W,m)\{1-\rho_a(W,m)\}
\Big]
\Bigg].
\end{split}
\]
Under this reweighted law, $Y$ is binary with mean $\theta(a,1-a)$.
Therefore,
\[
K_a
=
\theta(a,1-a)\{1-\theta(a,1-a)\},
\]
which proves \eqref{eq:binary-K}.

Finally, Proposition~\ref{prop:oracle-rmpw-var} and
\eqref{eq:theta-design-effect-decomp} give
\[
V_\theta^B(a,a')
=
\frac{d_{a,a'}B_{a,a'}+C_{a,a'}}{\pi_a},
\]
which is the mediation-mean design-stage variance. When $a'=a$,
$R_{a,a}^W=1$, so $d_{a,a}=1$, $C_{a,a}=0$, and
$B_{a,a}=\theta(a,a)\{1-\theta(a,a)\}$. Substitution into
\eqref{eq:direct-design-effect-var} and
\eqref{eq:indirect-design-effect-var}, together with
\eqref{eq:binary-K}, gives the NDE and NIE design-stage variances.

\section{Simulation settings and additional oracle-ratio results}
\label{app:additional-oracle-results}

This section provides the complete simulation-panel definitions and the
oracle-ratio results under the mixed-covariate designs. The corresponding
Gaussian-covariate results are reported in the main text. For the mixed-covariate design, the mutually independent components of
$\mathbf X$ are generated as follows before standardization:
$X_j\sim\operatorname{Bernoulli}(p_j)$ for $j=1,\ldots,4$, where
$(p_1,p_2,p_3,p_4)=(0.2,0.4,0.6,0.8)$;
$X_5\sim\operatorname{Uniform}(0,1)$;
$X_j\sim\operatorname{Poisson}(j-5)$ for $j=6,7,8$;
$X_9\sim\operatorname{Gamma}(\text{shape}=2,\text{scale}=3)$; and
$X_{10}\sim\operatorname{Beta}(2,3)$. Each component is then centered and
scaled to have mean zero and variance one, with the notation $X_j$ retained
for the standardized variable.

For both covariate designs, define the fixed orthonormal vectors
\[
u_M=\frac{1}{\sqrt{10}}(1,\ldots,1)^\top,
\qquad
u_\perp=\frac{1}{\sqrt{82.5}}
(-4.5,-3.5,-2.5,-1.5,-0.5,0.5,1.5,2.5,3.5,4.5)^\top.
\]
For given values of $R_M^2$ and $\rho$, the coefficient vectors are
constructed as
\[
\beta_M
=
\left(\frac{R_M^2}{1-R_M^2}\right)^{1/2}u_M,
\qquad
\beta_Y
=
0.4\left\{\rho u_M+(1-\rho^2)^{1/2}u_\perp\right\}.
\]
Consequently,
$
\operatorname{Var}(\mathbf X^\top\beta_M)
=
\frac{R_M^2}{1-R_M^2},
$
and, when $R_M^2>0$,
$
\operatorname{Corr}
(\mathbf X^\top\beta_M,\mathbf X^\top\beta_Y)=\rho.
$
When $R_M^2=0$, we set $\beta_M=0$, in which case the mediator score is
constant and the correlation is not defined. The same coefficient
construction is used for the continuous- and binary-outcome simulations.
\subsection{Continuous outcome}
\begin{table}[htbp]
\centering
\footnotesize
\caption{Continuous-outcome simulation panels.}
\label{tab:continuous-panels}
\setlength{\tabcolsep}{4pt}
\resizebox{\textwidth}{!}{%
\begin{tabular}{lll}
\toprule
Panel & Design feature varied & Values \\
\midrule
Baseline & Baseline setting &
$\eta_1-\eta_0=0.08,\ \gamma_0=\gamma_1=0.4,\ q_0=0.35,\ q_1=0.55,\ R_M^2=0.25,\ \rho=0.5$ \\
A & Indirect pathway strength &
$\gamma_0=\gamma_1\in\{0.2,0.4,0.6\}$ \\
B & Direct pathway strength &
$\eta_1-\eta_0\in\{0.04,0.08,0.12\}$ \\
C & Mediator prevalence &
$\{q_0,q_1\}\in\{\{0.40,0.60\},\{0.20,0.40\},\{0.60,0.80\}\}$ \\
D & Mediator heterogeneity &
$R_M^2\in\{0,0.25,0.50\}$ \\
E & Prognostic-score correlation &
$\rho\in\{0,0.5,0.8\}$ \\
F & Treatment--mediator interaction &
$\{\gamma_0,\gamma_1\}\in\{\{0.3,0.5\},\{0.2,0.6\}\}$ \\
\bottomrule
\end{tabular}
}
\end{table}

\begin{table}[htbp]
\centering
\scriptsize
\caption{Oracle-ratio continuous-outcome simulations with mixed covariates.}
\label{tab:continuous-mixed}
\setlength{\tabcolsep}{2pt}
\resizebox{\textwidth}{!}{%
\begin{tabular}{l*{12}{c}}
\simtableheader
Baseline & \simresult{438}{459}{0.812} & \simresult{477}{474}{0.810} & \simresult{5250}{5109}{0.818} & \simresult{5170}{5058}{0.788} \\
$\gamma_0=\gamma_1=0.2$ & \simresult{1755}{1816}{0.808} & \simresult{1915}{1934}{0.814} & \simresult{5068}{5165}{0.766} & \simresult{5014}{5066}{0.790} \\
$\gamma_0=\gamma_1=0.4$ & \simresult{436}{435}{0.774} & \simresult{476}{480}{0.780} & \simresult{5244}{5169}{0.834} & \simresult{5164}{5006}{0.818} \\
$\gamma_0=\gamma_1=0.6$ & \simresult{194}{188}{0.786} & \simresult{211}{218}{0.782} & \simresult{5506}{5765}{0.826} & \simresult{5387}{5501}{0.810} \\
$\eta_1-\eta_0=0.04$ & \simresult{436}{439}{0.806} & \simresult{476}{460}{0.798} & \simresult{20950}{23089}{0.796} & \simresult{20631}{20511}{0.786} \\
$\eta_1-\eta_0=0.08$ & \simresult{437}{440}{0.792} & \simresult{477}{465}{0.826} & \simresult{5245}{5241}{0.786} & \simresult{5165}{5142}{0.812} \\
$\eta_1-\eta_0=0.12$ & \simresult{437}{454}{0.820} & \simresult{476}{434}{0.798} & \simresult{2331}{2310}{0.814} & \simresult{2295}{2319}{0.800} \\
$\{q_0,q_1\}=\{0.40,0.60\}$ & \simresult{449}{436}{0.822} & \simresult{450}{450}{0.808} & \simresult{5204}{5182}{0.750} & \simresult{5204}{4979}{0.792} \\
$\{q_0,q_1\}=\{0.20,0.40\}$ & \simresult{462}{488}{0.788} & \simresult{702}{680}{0.798} & \simresult{5493}{5777}{0.802} & \simresult{5089}{5059}{0.824} \\
$\{q_0,q_1\}=\{0.60,0.80\}$ & \simresult{702}{726}{0.798} & \simresult{462}{481}{0.840} & \simresult{5083}{4739}{0.786} & \simresult{5487}{5404}{0.784} \\
$R_M^2=0$ & \simresult{363}{374}{0.780} & \simresult{395}{398}{0.798} & \simresult{5038}{4982}{0.810} & \simresult{4982}{4818}{0.822} \\
$R_M^2=0.25$ & \simresult{436}{431}{0.816} & \simresult{476}{477}{0.800} & \simresult{5244}{5010}{0.842} & \simresult{5164}{4902}{0.830} \\
$R_M^2=0.50$ & \simresult{586}{568}{0.816} & \simresult{643}{670}{0.814} & \simresult{5462}{5280}{0.796} & \simresult{5359}{5414}{0.828} \\
$\rho=0$ & \simresult{448}{425}{0.820} & \simresult{490}{497}{0.798} & \simresult{5129}{5110}{0.812} & \simresult{5062}{5103}{0.828} \\
$\rho=0.5$ & \simresult{437}{436}{0.794} & \simresult{477}{491}{0.768} & \simresult{5248}{5232}{0.794} & \simresult{5168}{5148}{0.812} \\
$\rho=0.8$ & \simresult{429}{450}{0.782} & \simresult{466}{465}{0.810} & \simresult{5311}{5418}{0.798} & \simresult{5224}{5185}{0.818} \\
$\{\gamma_0,\gamma_1\}=\{0.30,0.50\}$ & \simresult{279}{256}{0.800} & \simresult{847}{872}{0.810} & \simresult{930}{919}{0.820} & \simresult{1468}{1529}{0.818} \\
$\{\gamma_0,\gamma_1\}=\{0.20,0.60\}$ & \simresult{194}{185}{0.820} & \simresult{1916}{1985}{0.796} & \simresult{375}{360}{0.808} & \simresult{683}{681}{0.802} \\
\bottomrule
\end{tabular}
}
\end{table}

\subsection{Binary outcome}

\begin{table}[htbp]
\centering
\footnotesize
\caption{Binary-outcome simulation panels.}
\label{tab:binary-panels}
\setlength{\tabcolsep}{4pt}
\resizebox{\textwidth}{!}{%
\begin{tabular}{lll}
\toprule
Panel & Design feature varied & Values \\
\midrule
Baseline & Baseline setting &
$\eta_1-\eta_0=0.25,\ \gamma_0=\gamma_1=0.35,\ q_0=0.35,\ q_1=0.55,\ R_M^2=0.25,\ \rho=0.5,\ r_0=0.30$ \\
A & Indirect pathway strength &
$\gamma_0=\gamma_1\in\{0.20,0.35,0.50\}$ \\
B & Direct pathway strength &
$\eta_1-\eta_0\in\{0.15,0.25,0.35\}$ \\
C & Mediator prevalence &
$\{q_0,q_1\}\in\{\{0.40,0.60\},\{0.20,0.40\},\{0.60,0.80\}\}$ \\
D & Mediator heterogeneity &
$R_M^2\in\{0,0.25,0.50\}$ \\
E & Prognostic-score correlation &
$\rho\in\{0,0.5,0.8\}$ \\
F & Treatment--mediator interaction &
$\{\gamma_0,\gamma_1\}\in\{\{0.25,0.45\},\{0.15,0.55\}\}$ \\
G & Outcome prevalence &
$r_0\in\{0.15,0.30,0.50\}$ \\
\bottomrule
\end{tabular}
}
\end{table}

\begin{table}[htbp]
\centering
\scriptsize
\caption{Oracle-ratio binary-outcome simulations with mixed covariates.}
\label{tab:binary-mixed}
\setlength{\tabcolsep}{2pt}
\resizebox{\textwidth}{!}{%
\begin{tabular}{l*{12}{c}}
\simtableheader
Baseline & \simresult{928}{881}{0.748} & \simresult{1094}{1093}{0.814} & \simresult{889}{865}{0.848} & \simresult{883}{861}{0.804} \\
$\gamma_0=\gamma_1=0.20$ & \simresult{2865}{2711}{0.814} & \simresult{3323}{3472}{0.806} & \simresult{864}{859}{0.790} & \simresult{863}{901}{0.788} \\
$\gamma_0=\gamma_1=0.35$ & \simresult{925}{952}{0.786} & \simresult{1091}{1059}{0.818} & \simresult{888}{912}{0.772} & \simresult{882}{898}{0.780} \\
$\gamma_0=\gamma_1=0.50$ & \simresult{449}{481}{0.772} & \simresult{539}{540}{0.792} & \simresult{922}{914}{0.814} & \simresult{912}{880}{0.818} \\
$\eta_1-\eta_0=0.15$ & \simresult{944}{866}{0.804} & \simresult{1091}{1081}{0.780} & \simresult{2504}{2494}{0.810} & \simresult{2477}{2477}{0.800} \\
$\eta_1-\eta_0=0.25$ & \simresult{927}{880}{0.792} & \simresult{1093}{1088}{0.838} & \simresult{888}{888}{0.804} & \simresult{882}{888}{0.792} \\
$\eta_1-\eta_0=0.35$ & \simresult{914}{947}{0.780} & \simresult{1093}{1065}{0.810} & \simresult{446}{432}{0.808} & \simresult{445}{426}{0.824} \\
$\{q_0,q_1\}=\{0.40,0.60\}$ & \simresult{955}{1000}{0.784} & \simresult{1036}{1038}{0.786} & \simresult{881}{866}{0.808} & \simresult{886}{872}{0.818} \\
$\{q_0,q_1\}=\{0.20,0.40\}$ & \simresult{961}{943}{0.818} & \simresult{1566}{1539}{0.814} & \simresult{932}{927}{0.802} & \simresult{877}{911}{0.790} \\
$\{q_0,q_1\}=\{0.60,0.80\}$ & \simresult{1473}{1576}{0.772} & \simresult{1055}{1005}{0.806} & \simresult{857}{863}{0.790} & \simresult{921}{900}{0.820} \\
$R_M^2=0$ & \simresult{754}{751}{0.804} & \simresult{907}{944}{0.778} & \simresult{852}{835}{0.774} & \simresult{845}{848}{0.794} \\
$R_M^2=0.25$ & \simresult{926}{918}{0.812} & \simresult{1092}{1019}{0.826} & \simresult{888}{871}{0.788} & \simresult{882}{871}{0.776} \\
$R_M^2=0.50$ & \simresult{1242}{1241}{0.800} & \simresult{1471}{1487}{0.778} & \simresult{927}{971}{0.776} & \simresult{922}{912}{0.816} \\
$\rho=0$ & \simresult{923}{901}{0.784} & \simresult{1122}{1081}{0.794} & \simresult{865}{847}{0.772} & \simresult{857}{841}{0.776} \\
$\rho=0.5$ & \simresult{927}{880}{0.778} & \simresult{1093}{1082}{0.822} & \simresult{889}{886}{0.842} & \simresult{883}{885}{0.764} \\
$\rho=0.8$ & \simresult{922}{893}{0.798} & \simresult{1067}{1028}{0.814} & \simresult{901}{962}{0.798} & \simresult{897}{863}{0.816} \\
$\{\gamma_0,\gamma_1\}=\{0.25,0.45\}$ & \simresult{550}{532}{0.814} & \simresult{2128}{2149}{0.806} & \simresult{404}{410}{0.796} & \simresult{512}{523}{0.786} \\
$\{\gamma_0,\gamma_1\}=\{0.15,0.55\}$ & \simresult{365}{384}{0.766} & \simresult{5892}{5741}{0.810} & \simresult{231}{240}{0.792} & \simresult{337}{343}{0.798} \\
$r_0=0.15$ & \simresult{1136}{1141}{0.794} & \simresult{1488}{1430}{0.812} & \simresult{1134}{1135}{0.784} & \simresult{1138}{1136}{0.824} \\
$r_0=0.30$ & \simresult{923}{1001}{0.770} & \simresult{1089}{1066}{0.824} & \simresult{888}{885}{0.828} & \simresult{882}{865}{0.778} \\
$r_0=0.50$ & \simresult{932}{965}{0.770} & \simresult{986}{921}{0.766} & \simresult{845}{764}{0.786} & \simresult{833}{766}{0.794} \\
\bottomrule
\end{tabular}
}
\end{table}

\section{Estimated-ratio simulation results}
\label{app:estimated-ratio-results}

This section reports the estimated-ratio results under both covariate designs.
In these analyses, the mediator probit model is refit within each
simulated trial before constructing the mediator probability ratios. The
analytic sample size $N$ is the same design-stage sample size as in the main
oracle-ratio analysis, while $N_{\mathrm{sim}}$ is the
simulation-based benchmark obtained under the estimated-ratio implementation.
For each estimand, $N_{\mathrm{sim}}$ is computed from an
estimand-specific local adaptive grid that brackets 0.80, followed by
monotone interpolation of the empirical power curve.
The reported power at $N$ is obtained from 500 trials at the design-stage
sample size. Wald standard errors use the same influence-function plug-in
calculation as in the oracle analysis, with the fitted mediator probability
ratios treated as fixed.

\subsection{Continuous outcome}

\begin{table}[!htbp]
\centering
\footnotesize
\caption{Robustness checks for continuous-outcome simulations.}
\label{tab:continuous-robustness}
\setlength{\tabcolsep}{4pt}
\resizebox{\textwidth}{!}{%
\begin{tabular}{lcccc}
\toprule
Analysis & Mean power (SD) & Power range & Indirect-effect null (SD) & Direct-effect null (SD) \\
\midrule
Oracle mediator ratio & 0.802 (0.017) & 0.750--0.846 & 0.056 (0.010) & 0.050 (0.011) \\
Estimated mediator ratio & 0.799 (0.019) & 0.748--0.846 & 0.051 (0.009) & 0.046 (0.010) \\
\bottomrule
\end{tabular}
}
\end{table}

\begin{table}[!htbp]
\centering
\scriptsize
\caption{Estimated-ratio continuous-outcome simulations with Gaussian covariates.}
\label{tab:app-est-continuous-gaussian}
\setlength{\tabcolsep}{2pt}
\resizebox{\textwidth}{!}{%
\begin{tabular}{l*{12}{c}}
\simtableheader
Baseline & \simresult{437}{435}{0.768} & \simresult{477}{462}{0.800} & \simresult{5246}{5309}{0.814} & \simresult{5166}{5413}{0.838} \\
$\gamma_0=\gamma_1=0.2$ & \simresult{1757}{1734}{0.790} & \simresult{1916}{1886}{0.812} & \simresult{5066}{4925}{0.822} & \simresult{5013}{4776}{0.794} \\
$\gamma_0=\gamma_1=0.4$ & \simresult{436}{422}{0.790} & \simresult{476}{459}{0.788} & \simresult{5235}{5185}{0.826} & \simresult{5155}{5155}{0.812} \\
$\gamma_0=\gamma_1=0.6$ & \simresult{194}{193}{0.784} & \simresult{212}{206}{0.802} & \simresult{5514}{5589}{0.806} & \simresult{5395}{5611}{0.806} \\
$\eta_1-\eta_0=0.04$ & \simresult{436}{434}{0.784} & \simresult{475}{471}{0.794} & \simresult{20963}{20160}{0.804} & \simresult{20644}{21383}{0.804} \\
$\eta_1-\eta_0=0.08$ & \simresult{438}{455}{0.826} & \simresult{477}{483}{0.778} & \simresult{5246}{5435}{0.820} & \simresult{5166}{5032}{0.798} \\
$\eta_1-\eta_0=0.12$ & \simresult{437}{436}{0.788} & \simresult{476}{468}{0.804} & \simresult{2331}{2368}{0.840} & \simresult{2296}{2286}{0.798} \\
$\{q_0,q_1\}=\{0.40,0.60\}$ & \simresult{449}{472}{0.796} & \simresult{449}{442}{0.798} & \simresult{5195}{5129}{0.816} & \simresult{5195}{5108}{0.790} \\
$\{q_0,q_1\}=\{0.20,0.40\}$ & \simresult{462}{454}{0.788} & \simresult{701}{713}{0.812} & \simresult{5497}{5381}{0.804} & \simresult{5092}{5015}{0.772} \\
$\{q_0,q_1\}=\{0.60,0.80\}$ & \simresult{702}{712}{0.806} & \simresult{462}{491}{0.790} & \simresult{5086}{5595}{0.808} & \simresult{5491}{5148}{0.776} \\
$R_M^2=0$ & \simresult{363}{369}{0.794} & \simresult{394}{398}{0.778} & \simresult{5028}{5123}{0.760} & \simresult{4971}{5028}{0.786} \\
$R_M^2=0.25$ & \simresult{436}{442}{0.782} & \simresult{475}{451}{0.796} & \simresult{5240}{5409}{0.804} & \simresult{5160}{4600}{0.798} \\
$R_M^2=0.50$ & \simresult{586}{603}{0.794} & \simresult{642}{627}{0.784} & \simresult{5461}{5735}{0.778} & \simresult{5358}{5179}{0.790} \\
$\rho=0$ & \simresult{449}{426}{0.838} & \simresult{491}{479}{0.790} & \simresult{5129}{5322}{0.824} & \simresult{5062}{5314}{0.800} \\
$\rho=0.5$ & \simresult{436}{458}{0.804} & \simresult{475}{484}{0.808} & \simresult{5236}{5379}{0.794} & \simresult{5156}{5385}{0.820} \\
$\rho=0.8$ & \simresult{428}{441}{0.796} & \simresult{465}{475}{0.762} & \simresult{5309}{5320}{0.800} & \simresult{5222}{5081}{0.810} \\
$\{\gamma_0,\gamma_1\}=\{0.30,0.50\}$ & \simresult{279}{288}{0.794} & \simresult{848}{860}{0.824} & \simresult{933}{882}{0.836} & \simresult{1474}{1469}{0.834} \\
$\{\gamma_0,\gamma_1\}=\{0.20,0.60\}$ & \simresult{194}{208}{0.820} & \simresult{1915}{1994}{0.786} & \simresult{375}{367}{0.784} & \simresult{685}{671}{0.788} \\
\bottomrule
\end{tabular}
}
\end{table}

\begin{table}[!htbp]
\centering
\scriptsize
\caption{Estimated-ratio continuous-outcome simulations with mixed covariates.}
\label{tab:app-est-continuous-mixed}
\setlength{\tabcolsep}{2pt}
\resizebox{\textwidth}{!}{%
\begin{tabular}{l*{12}{c}}
\simtableheader
Baseline & \simresult{438}{445}{0.818} & \simresult{477}{474}{0.810} & \simresult{5250}{5072}{0.792} & \simresult{5170}{5082}{0.824} \\
$\gamma_0=\gamma_1=0.2$ & \simresult{1755}{1768}{0.804} & \simresult{1915}{1944}{0.798} & \simresult{5068}{4928}{0.796} & \simresult{5014}{4868}{0.808} \\
$\gamma_0=\gamma_1=0.4$ & \simresult{436}{449}{0.846} & \simresult{476}{511}{0.776} & \simresult{5244}{5399}{0.776} & \simresult{5164}{5209}{0.802} \\
$\gamma_0=\gamma_1=0.6$ & \simresult{194}{216}{0.790} & \simresult{211}{206}{0.770} & \simresult{5506}{5226}{0.790} & \simresult{5387}{5193}{0.818} \\
$\eta_1-\eta_0=0.04$ & \simresult{436}{431}{0.774} & \simresult{476}{521}{0.778} & \simresult{20950}{20348}{0.820} & \simresult{20631}{20298}{0.806} \\
$\eta_1-\eta_0=0.08$ & \simresult{437}{484}{0.758} & \simresult{477}{494}{0.786} & \simresult{5245}{5384}{0.816} & \simresult{5165}{5170}{0.760} \\
$\eta_1-\eta_0=0.12$ & \simresult{437}{428}{0.800} & \simresult{476}{466}{0.790} & \simresult{2331}{2346}{0.806} & \simresult{2295}{2363}{0.828} \\
$\{q_0,q_1\}=\{0.40,0.60\}$ & \simresult{449}{467}{0.824} & \simresult{450}{437}{0.812} & \simresult{5204}{5013}{0.802} & \simresult{5204}{5001}{0.814} \\
$\{q_0,q_1\}=\{0.20,0.40\}$ & \simresult{462}{460}{0.772} & \simresult{702}{684}{0.766} & \simresult{5493}{5303}{0.806} & \simresult{5089}{5046}{0.812} \\
$\{q_0,q_1\}=\{0.60,0.80\}$ & \simresult{702}{693}{0.790} & \simresult{462}{469}{0.794} & \simresult{5083}{5003}{0.794} & \simresult{5487}{5278}{0.794} \\
$R_M^2=0$ & \simresult{363}{367}{0.788} & \simresult{395}{400}{0.784} & \simresult{5038}{4984}{0.812} & \simresult{4982}{4916}{0.748} \\
$R_M^2=0.25$ & \simresult{436}{468}{0.818} & \simresult{476}{475}{0.804} & \simresult{5244}{5108}{0.812} & \simresult{5164}{4953}{0.800} \\
$R_M^2=0.50$ & \simresult{586}{595}{0.768} & \simresult{643}{673}{0.790} & \simresult{5462}{5352}{0.788} & \simresult{5359}{5418}{0.818} \\
$\rho=0$ & \simresult{448}{439}{0.800} & \simresult{490}{475}{0.780} & \simresult{5129}{5292}{0.788} & \simresult{5062}{4864}{0.786} \\
$\rho=0.5$ & \simresult{437}{444}{0.818} & \simresult{477}{509}{0.808} & \simresult{5248}{5093}{0.800} & \simresult{5168}{4914}{0.820} \\
$\rho=0.8$ & \simresult{429}{434}{0.778} & \simresult{466}{480}{0.784} & \simresult{5311}{5596}{0.762} & \simresult{5224}{5263}{0.810} \\
$\{\gamma_0,\gamma_1\}=\{0.30,0.50\}$ & \simresult{279}{292}{0.784} & \simresult{847}{861}{0.840} & \simresult{930}{950}{0.834} & \simresult{1468}{1425}{0.796} \\
$\{\gamma_0,\gamma_1\}=\{0.20,0.60\}$ & \simresult{194}{201}{0.796} & \simresult{1916}{1864}{0.810} & \simresult{375}{369}{0.828} & \simresult{683}{658}{0.792} \\
\bottomrule
\end{tabular}
}
\end{table}

\subsection{Binary outcome}

\begin{table}[!htbp]
\centering
\footnotesize
\caption{Robustness checks for binary-outcome simulations.}
\label{tab:binary-robustness}
\setlength{\tabcolsep}{4pt}
\resizebox{\textwidth}{!}{%
\begin{tabular}{lcccc}
\toprule
Analysis & Mean power (SD) & Power range & Indirect-effect null (SD) & Direct-effect null (SD) \\
\midrule
Oracle mediator ratio & 0.801 (0.019) & 0.748--0.848 & 0.051 (0.010) & 0.052 (0.011) \\
Estimated mediator ratio & 0.800 (0.018) & 0.758--0.850 & 0.046 (0.009) & 0.050 (0.010) \\
\bottomrule
\end{tabular}
}
\end{table}

\begin{table}[!htbp]
\centering
\scriptsize
\caption{Estimated-ratio binary-outcome simulations with Gaussian covariates.}
\label{tab:app-est-binary-gaussian}
\setlength{\tabcolsep}{2pt}
\resizebox{\textwidth}{!}{%
\begin{tabular}{l*{12}{c}}
\simtableheader
Baseline & \simresult{923}{923}{0.794} & \simresult{1087}{1068}{0.818} & \simresult{885}{896}{0.806} & \simresult{879}{901}{0.826} \\
$\gamma_0=\gamma_1=0.20$ & \simresult{2851}{2800}{0.820} & \simresult{3302}{3147}{0.798} & \simresult{861}{840}{0.804} & \simresult{859}{854}{0.824} \\
$\gamma_0=\gamma_1=0.35$ & \simresult{920}{1022}{0.758} & \simresult{1085}{1104}{0.820} & \simresult{883}{883}{0.784} & \simresult{877}{890}{0.802} \\
$\gamma_0=\gamma_1=0.50$ & \simresult{449}{504}{0.774} & \simresult{538}{527}{0.822} & \simresult{920}{938}{0.828} & \simresult{909}{878}{0.770} \\
$\eta_1-\eta_0=0.15$ & \simresult{938}{1017}{0.804} & \simresult{1084}{1085}{0.822} & \simresult{2495}{2496}{0.798} & \simresult{2466}{2397}{0.796} \\
$\eta_1-\eta_0=0.25$ & \simresult{924}{912}{0.796} & \simresult{1088}{1065}{0.810} & \simresult{885}{872}{0.794} & \simresult{879}{862}{0.764} \\
$\eta_1-\eta_0=0.35$ & \simresult{910}{975}{0.790} & \simresult{1086}{1138}{0.812} & \simresult{445}{440}{0.772} & \simresult{443}{439}{0.814} \\
$\{q_0,q_1\}=\{0.40,0.60\}$ & \simresult{952}{967}{0.808} & \simresult{1031}{985}{0.802} & \simresult{876}{784}{0.788} & \simresult{881}{910}{0.772} \\
$\{q_0,q_1\}=\{0.20,0.40\}$ & \simresult{952}{987}{0.788} & \simresult{1549}{1601}{0.808} & \simresult{930}{986}{0.798} & \simresult{874}{924}{0.792} \\
$\{q_0,q_1\}=\{0.60,0.80\}$ & \simresult{1478}{1632}{0.798} & \simresult{1056}{1098}{0.832} & \simresult{855}{873}{0.788} & \simresult{917}{927}{0.782} \\
$R_M^2=0$ & \simresult{750}{788}{0.780} & \simresult{901}{909}{0.794} & \simresult{848}{909}{0.804} & \simresult{840}{817}{0.784} \\
$R_M^2=0.25$ & \simresult{920}{935}{0.806} & \simresult{1084}{1030}{0.774} & \simresult{884}{905}{0.806} & \simresult{878}{862}{0.792} \\
$R_M^2=0.50$ & \simresult{1234}{1234}{0.768} & \simresult{1461}{1401}{0.802} & \simresult{923}{916}{0.788} & \simresult{917}{871}{0.816} \\
$\rho=0$ & \simresult{923}{952}{0.798} & \simresult{1124}{1102}{0.802} & \simresult{865}{827}{0.788} & \simresult{856}{813}{0.810} \\
$\rho=0.5$ & \simresult{920}{915}{0.804} & \simresult{1085}{1101}{0.806} & \simresult{884}{911}{0.796} & \simresult{877}{912}{0.786} \\
$\rho=0.8$ & \simresult{917}{941}{0.776} & \simresult{1059}{1052}{0.816} & \simresult{897}{893}{0.822} & \simresult{892}{857}{0.796} \\
$\{\gamma_0,\gamma_1\}=\{0.25,0.45\}$ & \simresult{549}{561}{0.798} & \simresult{2118}{2050}{0.822} & \simresult{403}{448}{0.772} & \simresult{510}{499}{0.810} \\
$\{\gamma_0,\gamma_1\}=\{0.15,0.55\}$ & \simresult{364}{400}{0.780} & \simresult{5854}{6045}{0.812} & \simresult{229}{245}{0.808} & \simresult{335}{338}{0.792} \\
$r_0=0.15$ & \simresult{1122}{1088}{0.806} & \simresult{1477}{1427}{0.836} & \simresult{1122}{1180}{0.810} & \simresult{1123}{1086}{0.776} \\
$r_0=0.30$ & \simresult{923}{923}{0.772} & \simresult{1087}{1073}{0.780} & \simresult{885}{890}{0.808} & \simresult{879}{912}{0.798} \\
$r_0=0.50$ & \simresult{935}{923}{0.792} & \simresult{987}{1038}{0.790} & \simresult{849}{833}{0.828} & \simresult{836}{863}{0.814} \\
\bottomrule
\end{tabular}
}
\end{table}

\begin{table}[!htbp]
\centering
\scriptsize
\caption{Estimated-ratio binary-outcome simulations with mixed covariates.}
\label{tab:app-est-binary-mixed}
\setlength{\tabcolsep}{2pt}
\resizebox{\textwidth}{!}{%
\begin{tabular}{l*{12}{c}}
\simtableheader
Baseline & \simresult{928}{949}{0.790} & \simresult{1094}{1070}{0.814} & \simresult{889}{928}{0.820} & \simresult{883}{836}{0.778} \\
$\gamma_0=\gamma_1=0.20$ & \simresult{2865}{2760}{0.788} & \simresult{3323}{3215}{0.818} & \simresult{864}{857}{0.828} & \simresult{863}{871}{0.822} \\
$\gamma_0=\gamma_1=0.35$ & \simresult{925}{955}{0.780} & \simresult{1091}{1124}{0.832} & \simresult{888}{867}{0.786} & \simresult{882}{874}{0.798} \\
$\gamma_0=\gamma_1=0.50$ & \simresult{449}{458}{0.822} & \simresult{539}{525}{0.782} & \simresult{922}{894}{0.810} & \simresult{912}{903}{0.786} \\
$\eta_1-\eta_0=0.15$ & \simresult{944}{970}{0.780} & \simresult{1091}{1189}{0.846} & \simresult{2504}{2554}{0.808} & \simresult{2477}{2520}{0.850} \\
$\eta_1-\eta_0=0.25$ & \simresult{927}{965}{0.790} & \simresult{1093}{1050}{0.822} & \simresult{888}{878}{0.778} & \simresult{882}{904}{0.810} \\
$\eta_1-\eta_0=0.35$ & \simresult{914}{876}{0.784} & \simresult{1093}{1130}{0.810} & \simresult{446}{437}{0.812} & \simresult{445}{459}{0.808} \\
$\{q_0,q_1\}=\{0.40,0.60\}$ & \simresult{955}{924}{0.808} & \simresult{1036}{1020}{0.786} & \simresult{881}{928}{0.808} & \simresult{886}{865}{0.810} \\
$\{q_0,q_1\}=\{0.20,0.40\}$ & \simresult{961}{937}{0.778} & \simresult{1566}{1554}{0.806} & \simresult{932}{943}{0.810} & \simresult{877}{854}{0.796} \\
$\{q_0,q_1\}=\{0.60,0.80\}$ & \simresult{1473}{1503}{0.788} & \simresult{1055}{1052}{0.794} & \simresult{857}{888}{0.820} & \simresult{921}{952}{0.808} \\
$R_M^2=0$ & \simresult{754}{704}{0.820} & \simresult{907}{890}{0.812} & \simresult{852}{792}{0.798} & \simresult{845}{824}{0.790} \\
$R_M^2=0.25$ & \simresult{926}{913}{0.798} & \simresult{1092}{1055}{0.794} & \simresult{888}{862}{0.778} & \simresult{882}{856}{0.808} \\
$R_M^2=0.50$ & \simresult{1242}{1248}{0.798} & \simresult{1471}{1533}{0.802} & \simresult{927}{954}{0.810} & \simresult{922}{994}{0.782} \\
$\rho=0$ & \simresult{923}{961}{0.776} & \simresult{1122}{1091}{0.804} & \simresult{865}{862}{0.774} & \simresult{857}{910}{0.820} \\
$\rho=0.5$ & \simresult{927}{935}{0.808} & \simresult{1093}{1045}{0.840} & \simresult{889}{847}{0.774} & \simresult{883}{894}{0.792} \\
$\rho=0.8$ & \simresult{922}{918}{0.778} & \simresult{1067}{1029}{0.780} & \simresult{901}{881}{0.804} & \simresult{897}{922}{0.784} \\
$\{\gamma_0,\gamma_1\}=\{0.25,0.45\}$ & \simresult{550}{564}{0.788} & \simresult{2128}{2041}{0.818} & \simresult{404}{391}{0.766} & \simresult{512}{501}{0.784} \\
$\{\gamma_0,\gamma_1\}=\{0.15,0.55\}$ & \simresult{365}{377}{0.798} & \simresult{5892}{5835}{0.818} & \simresult{231}{238}{0.802} & \simresult{337}{353}{0.772} \\
$r_0=0.15$ & \simresult{1136}{1109}{0.794} & \simresult{1488}{1408}{0.822} & \simresult{1134}{1161}{0.812} & \simresult{1138}{1127}{0.814} \\
$r_0=0.30$ & \simresult{923}{923}{0.810} & \simresult{1089}{1089}{0.786} & \simresult{888}{915}{0.792} & \simresult{882}{816}{0.802} \\
$r_0=0.50$ & \simresult{932}{916}{0.798} & \simresult{986}{1101}{0.788} & \simresult{845}{923}{0.834} & \simresult{833}{874}{0.770} \\
\bottomrule
\end{tabular}
}
\end{table}
\section{Comparison with total-effect-based planning}
\label{app:ate-comparison}

We compare the proposed natural-effect sample sizes with the conventional
two-arm ATE calculation in the baseline setting and two representative panel
settings for each outcome type. The ATE calculation targets 80\% power for a
one-sided level-0.05 test using the marginal arm-specific outcome variances.
At the resulting sample size $N_{\mathrm{ATE}}$, each of 1000 simulated trials
is analyzed using the ordinary treatment-arm mean difference and the four
normalized RMPW estimators. The RMPW analyses use the true mediator
probability ratios and the known-ratio influence-function standard errors used
in the main oracle-ratio simulations.

\begin{table}[!htbp]
\centering
\scriptsize
\caption{Comparison of conventional total-effect and proposed mediation
designs in representative Gaussian-covariate settings. For each natural
effect, the entry is $N$ (empirical power at $N_{\mathrm{ATE}}$).}
\label{tab:app-ate-comparison}
\setlength{\tabcolsep}{3pt}
\resizebox{\textwidth}{!}{%
\begin{tabular}{llccrrrr}
\toprule
&& \multicolumn{2}{c}{Total treatment effect}
& \multicolumn{4}{c}{Natural effect: $N$ (power at $N_{\mathrm{ATE}}$)} \\
\cmidrule(lr){3-4}\cmidrule(lr){5-8}
Outcome & Scenario & $N_{\mathrm{ATE}}$ & Power
& $\delta(1)$ & $\delta(0)$ & $\zeta(1)$ & $\zeta(0)$ \\
\midrule
Continuous & Baseline
& 1187 & 0.790
& 437 (0.994) & 477 (0.988) & 5246 (0.338) & 5166 (0.339) \\
Continuous & $\gamma_0=\gamma_1=0.20$
& 2035 & 0.816
& 1757 (0.857) & 1916 (0.832) & 5066 (0.442) & 5013 (0.445) \\
Continuous & $\eta_1-\eta_0=0.04$
& 2111 & 0.803
& 436 (1.000) & 475 (1.000) & 20963 (0.199) & 20644 (0.209) \\
\addlinespace
Binary & Baseline
& 479 & 0.801
& 923 (0.538) & 1087 (0.502) & 885 (0.577) & 879 (0.576) \\
Binary & $\gamma_0=\gamma_1=0.20$
& 574 & 0.797
& 2851 (0.310) & 3302 (0.249) & 861 (0.655) & 859 (0.654) \\
Binary & $\eta_1-\eta_0=0.15$
& 1020 & 0.806
& 938 (0.818) & 1084 (0.794) & 2495 (0.500) & 2466 (0.500) \\
\bottomrule
\end{tabular}
}
\end{table}
\section{ACTG175 pilot calibration and resampling details}
\label{app:actg175-calibration}

\subsection{Pilot population and variable construction}

We use the ACTG175 data distributed with the \texttt{speff2trial}
package as a working pilot population. Treatment is coded as $A=0$ for
zidovudine monotherapy and $A=1$ for the three other treatment arms
pooled together. The mediator is the indicator
\[
M=I(\mathrm{CD4}_{20}-\mathrm{CD4}_{0}\ge c),
\]
where $c\in\{25,50,100\}$ cells per cubic millimeter and $c=50$ is the
primary threshold.

The outcome is the indicator of remaining free of the ACTG175 composite
event through 96 weeks. The composite event is the first occurrence of
a decline in CD4 count of at least 50 cells per cubic millimeter,
progression to AIDS, or death. Participants are included only if they
remain event-free through the 20-week mediator landmark and have
sufficient follow-up to determine their 96-week event status. We also
require complete information on the mediator, outcome, treatment, and
baseline covariates used in the mediator model. The resulting working
pilot population contains 1941 participants, with 468 in arm 0 and
1473 in arm 1.

The baseline covariates are age, weight, sex, race, hemophilia,
homosexual activity, history of intravenous drug use, Karnofsky score,
prior non-zidovudine antiretroviral therapy, zidovudine use during the
30 days before treatment initiation, prior zidovudine use, duration of
previous antiretroviral therapy, antiretroviral-treatment history,
symptomatic status, and baseline CD4 and CD8 counts. Age, weight,
Karnofsky score, baseline CD4 count, and baseline CD8 count are
standardized. The duration of previous antiretroviral therapy is
transformed using $\log(1+x)$ and then standardized.

Because eligibility for this working population depends partly on
post-randomization survival, mediator ascertainment, and outcome
follow-up, treatment randomization does not necessarily preserve
exchangeability within the selected landmark population. We therefore
use these data only to illustrate the calibration of design inputs and
do not interpret the resulting pilot targets as definitive causal
effects for the full randomized ACTG175 population. The reported sample
sizes refer to analyzable participants satisfying the landmark and
ascertainment criteria; translating them into baseline enrollment
targets would additionally require anticipated landmark eligibility,
mediator ascertainment, and outcome follow-up rates.

\subsection{Calibration of the design inputs}

For each mediator threshold, we fit the pooled probit model
\[
P(M=1\mid A,\mathbf X)
=
\Phi\bigl(\beta_0+\beta_A A+\mathbf X^\top\boldsymbol\beta_X\bigr).
\]
Let
\[
\widehat W_i=\widehat\beta_0+
\mathbf X_i^\top\widehat{\boldsymbol\beta}_X,
\qquad
\widehat\Delta_M=\widehat\beta_A.
\]
The mediator inputs are calibrated as
\[
\widehat q_a
=
\frac{1}{N_{\mathrm p}}
\sum_{i=1}^{N_{\mathrm p}}
\Phi(\widehat W_i+a\widehat\Delta_M),
\qquad a=0,1,
\]
and
\[
\widehat R_M^2
=
\frac{\widehat{\operatorname{Var}}(\widehat W)}
     {1+\widehat{\operatorname{Var}}(\widehat W)},
\]
where $N_{\mathrm p}=1941$ is the pilot sample size. For numerical
stability, fitted mediator probabilities used in the probability ratios
are truncated to $[10^{-4},1-10^{-4}]$.

Using these fitted probabilities, we calculate the normalized RMPW
means
\[
\widehat\theta_H(a,a')
=
\frac{
\sum_{i:A_i=a}
\widehat R_{a,a'}(M_i,\mathbf X_i)Y_i
}{
\sum_{i:A_i=a}
\widehat R_{a,a'}(M_i,\mathbf X_i)
},
\]
and obtain the pilot targets for $\delta(1)$, $\delta(0)$,
$\zeta(1)$, and $\zeta(0)$ from the corresponding contrasts of the four
nested means. Within each treatment arm, we also calculate
\[
\widehat\kappa_a
=
\widehat{\operatorname{Corr}}(Y,\widehat W\mid A=a).
\]

For the binary-outcome working model, the arm-specific parameters
$\alpha_a^\rho$, $b_a^\rho$, and $\gamma_a^\rho$ are calibrated to
match the pilot values of $\theta(a,a)$, $\kappa_a$, and $\delta(a)$.
Together with $(\widehat q_0,\widehat q_1,\widehat R_M^2)$, these
parameters determine the variance components
$d_{a,a'}$, $B_{a,a'}$, $C_{a,a'}$, and $K_a$.  Substitution of these components into the
variance formulas gives the design-stage variance $V_{\mathrm{work}}$
for each estimand. Under the observed pilot allocation
\[
\widehat\pi_1=\frac{1473}{1941},
\]
the required sample size is
\[
N_{\mathrm{work}}
=
\left\lceil
\frac{
\{z_{0.95}+z_{0.80}\}^2V_{\mathrm{work}}
}{
\tau_*^2
}
\right\rceil,
\]
where $\tau_*$ is the corresponding pilot-calibrated natural-effect
target.

\subsection{Fixed-ratio pilot resampling}

We assess finite-sample behavior at each effect-specific
$N_{\mathrm{work}}$ using 500 arm-stratified resamples. For a sample
size $N$, we draw
\[
n_1=\operatorname{round}(\widehat\pi_1N),
\qquad
n_0=N-n_1
\]
participants with replacement from the two pilot treatment arms.
The mediator probabilities fitted in the full pilot population are
carried with the sampled participants and are not refitted within a
resample. Each resample is used to recompute the four normalized RMPW
means and the corresponding natural-effect estimate.

For an estimand with pilot target $\tau_*$, rejection is recorded when
\[
\operatorname{sign}(\tau_*)
\frac{\widehat\tau^{(b)}}
{\sqrt{V_{\mathrm{work}}/N_{\mathrm{work}}}}
>
z_{0.95}.
\]
Thus, both the mediator probability ratios and the design-stage
variance are held fixed at their full-pilot calibrated values. The
reported resampling power is the proportion of the 500 resamples that
reject, and its Monte Carlo standard error is
\[
\left\{
\frac{\widehat{\operatorname{Power}}
(1-\widehat{\operatorname{Power}})}{500}
\right\}^{1/2}.
\]

\section{Continuous-mediator regression working model}
\label{app:continuous-regression}

This section collects the continuous-mediator regression approximation. The
continuous mediator case is not the primary focus of the main text, but it is
useful as a reference point because it connects the proposed causal mediation
sample size framework to familiar product-of-coefficients calculations. The
working model is not needed for defining the causal mediation functionals; it
provides a tractable design approximation under which variances can be
expressed in terms of path coefficients and residual variances.

\subsection{Regression working model and estimators}

We assume the following potential mediator model: \begin{equation} \label{M(a)}
		M(a) = f_M(\mathbf{X}) + a \Delta_M + \sigma_M \epsilon_M,
	\end{equation}
where $f_M(\mathbf{X})$ is the baseline mediator mean function, $\Delta_M$ is the treatment-induced shift in the mediator, $\sigma_M$ is the mediator noise scale, and $\epsilon_M \sim \mathcal{N}(0,1)$. Throughout the paper, for simplicity, we set $\sigma_M=1$. 
	The potential outcome working model is \begin{equation}\label{Y(a)}
		Y(a,m) =  f_a(\mathbf{X})+ \gamma_a m+ \epsilon_a,
	\end{equation}
where $f_a(\mathbf{X})$ is the treatment-specific prognostic function, $\gamma_a$ is the mediator--outcome coefficient under treatment level $a$, and $\epsilon_a$ is the outcome error satisfying $E(\epsilon_a \mid \mathbf{X}) = 0$, $E(\epsilon_a^2\mid \mathbf{X}) = \sigma^2_{a} <\infty.$ We further assume that $\epsilon_M \indep \epsilon_a \mid \mathbf{X}$, $\epsilon_M \indep \mathbf{X}$. 

Under this working model, the natural indirect effect under treatment level $a$ can be identified as \[
	\delta(a) = \theta(a,1) -\theta(a,0) = \gamma_a \Delta_M.
\]
We consider the regression-based product estimator
\[
\hat{\delta}_R(a)=\hat{\gamma}_a\hat{\Delta}_M,
\]
where $\hat{\Delta}_M$ and $\hat{\gamma}_a$ are obtained from the planned
regression-based mediation analysis.
 For design-stage sample size calculation, we consider a pre-specified
regression-based mediation analysis. Specifically, the treatment-to-mediator
path coefficient $\Delta_M$ and the arm-specific mediator-to-outcome path
coefficient $\gamma_a$ are estimated by residualized least squares under the
structural models $(\ref{M(a)})$ and $(\ref{Y(a)})$. For the natural direct effect, under the structural working
model,
\[
\zeta(a)=\theta(1,a)-\theta(0,a)
=
\theta(1,1)-\theta(0,0)-\gamma_{1-a}\Delta_M .
\]
This representation motivates the estimator
\[
\hat\zeta_R(a)
=
\bar Y_1-\bar Y_0-\hat\gamma_{1-a}\hat\Delta_M ,
\]
where $\bar Y_j$ is the sample mean of $Y$ among subjects with $A=j$.

The following theorem gives the design-stage variances of these regression-based estimators. 
 	\begin{theorem}[Variance of regression based method] \label{th1}
 	Under Assumption \ref{seq ign}, for indirect effect estimator 
 	\[
 		\sqrt{N}(\hat{\delta}_R(a) - \delta(a)) \stackrel{d}{\to} \mathcal{N}(0, V_\delta^R(a)),
 	\]
 	where
 	 \[
 		V_\delta^R(a) =\frac{\gamma_a
 			^2}{\pi(1-\pi)}+ \frac{\Delta_M^2\sigma_a^2}{\pi_a}.
 	\] For the direct effect estimator 
 	\[
\sqrt{N}(\hat{\zeta}_R(a) - \zeta(a)) \stackrel{d}{\to} \mathcal{N}(0, V_\zeta^R(a)),
 	\]
 	where\[
 		V_\zeta^R (a) =  \frac{S_a^2- \gamma_a^2 + (\gamma_a-\gamma_{1-a})^2}{\pi_a}+ \frac{S^2_{1-a}- \gamma_{1-a}^2+\Delta^2_M \sigma^2_{1-a} }{\pi_{1-a}},
 	\]
 	and $S_a^2 = \operatorname{Var}(Y(a,M(a)))$.
 \end{theorem}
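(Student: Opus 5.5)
The plan is to derive influence-function expansions for $\hat\Delta_M$, $\hat\gamma_a$ and the arm means $\bar Y_j$. We then combine them by the delta method, as in the proof of Proposition~\ref{prop:oracle-rmpw-var}, and read off the variances arm by arm, exploiting that terms supported on different arms are uncorrelated.

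\textbf{Influence functions of the path estimators.} We interpret ``residualized least squares'' at the oracle level: $f_M(\mathbf X)$ and $f_a(\mathbf X)$ are known or removed at a rate that does not affect first-order behaviour, exactly as the mediator ratios were treated as known in Section~\ref{sec:variance-decomposition}.
\begin{itemize}
\item With $\sigma_M=1$, the residualized mediator in arm $j$ is $j\Delta_M+\epsilon_M$. The estimator $\hat\Delta_M$ is the difference in arm means of $M-f_M(\mathbf X)$, so its influence function is
\[
\psi_\Delta=\Bigl\{\frac{I(A=1)}{\pi_1}-\frac{I(A=0)}{\pi_0}\Bigr\}\epsilon_M ,
\]
with variance $1/\pi_1+1/\pi_0=1/\{\pi(1-\pi)\}$.
\item Within arm $a$, $\hat\gamma_a$ regresses $Y-f_a(\mathbf X)$ on the centred residualized mediator. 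Since $\operatorname{Var}(\epsilon_M)=1$, its influence function is
\[
\psi_\gamma=\frac{I(A=a)}{\pi_a}\,\epsilon_M\epsilon_a ,
\]
with variance $\sigma_a^2/\pi_a$, using $\epsilon_M\indep\epsilon_a\mid\mathbf X$ and $E(\epsilon_a^2\mid\mathbf X)=\sigma_a^2$.
\end{itemize}
A Taylor expansion of the ratio-form OLS estimator, parallel to the $S_N/D_N$ argument in Section~\ref{app:proof-oracle-rmpw-placeholder}, gives these expansions with $o_p(N^{-1/2})$ remainders.

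\textbf{Indirect effect.} The delta method gives the influence function $\gamma_a\psi_\Delta+\Delta_M\psi_\gamma$. The cross term is
\[
E(\psi_\Delta\psi_\gamma)\propto E(\epsilon_M^2\epsilon_a)=E\{\epsilon_M^2E(\epsilon_a\mid\mathbf X,\epsilon_M)\}=0,
\]
which follows from conditional independence of the errors. Therefore $V_\delta^R(a)=\gamma_a^2/\{\pi(1-\pi)\}+\Delta_M^2\sigma_a^2/\pi_a$.

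\textbf{Direct effect.} The influence function of $\hat\zeta_R(a)$ is
\[
\sum_j s_j\frac{I(A=j)}{\pi_j}\{Y-\theta(j,j)\}-\gamma_{1-a}\psi_\Delta-\Delta_M\psi_{\gamma_{1-a}},
\qquad s_1=1,\; s_0=-1 .
\]
Group this by arm; the arm pieces are uncorrelated because their indicators are disjoint.
\begin{itemize}
\item In arm $a$ the piece is $s_a\{Y-\theta(a,a)-\gamma_{1-a}\epsilon_M\}/\pi_a$. Using $\operatorname{Cov}\{Y(a,M(a)),\epsilon_M\}=\gamma_a$, its second moment is
\[
S_a^2-2\gamma_a\gamma_{1-a}+\gamma_{1-a}^2=S_a^2-\gamma_a^2+(\gamma_a-\gamma_{1-a})^2 .
\]
\item In arm $1-a$ the piece is $s_{1-a}\{Y-\theta-\gamma_{1-a}\epsilon_M\}/\pi_{1-a}-\Delta_M\epsilon_M\epsilon_{1-a}/\pi_{1-a}$. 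The first part has variance $S_{1-a}^2-\gamma_{1-a}^2$. The second has variance $\Delta_M^2\sigma_{1-a}^2$.
\item The arm-$(1-a)$ cross term reduces to
\[
E\bigl[\{f_{1-a}(\mathbf X)-Ef_{1-a}+\epsilon_{1-a}\}\epsilon_M\epsilon_{1-a}\bigr].
\]
It vanishes because $E(\epsilon_{1-a}\mid\mathbf X,\epsilon_M)=0$ and $E(\epsilon_{1-a}^2\epsilon_M)=\sigma_{1-a}^2E(\epsilon_M)=0$.
\end{itemize}
Summing the two arms gives $V_\zeta^R(a)$, and the central limit theorem gives asymptotic normality.

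\textbf{Expected obstacle.} The delicate step is justifying the oracle residualization. If $f_M$ and $f_a$ are estimated, for example linearly, one must argue that nuisance estimation contributes nothing at first order. For $\hat\Delta_M$ this follows from randomization, since the covariate adjustment is orthogonal to $A$. For $\hat\gamma_a$ it follows from the orthogonality $E(\epsilon_M\epsilon_a\mid\mathbf X)=0$, which makes the product score Neyman-orthogonal. I would state this explicitly, or else restrict the theorem to the oracle design-stage benchmark as the main text does.
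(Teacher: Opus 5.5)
Your architecture matches the paper's proof. You use influence functions for $\hat\Delta_M$, $\hat\gamma_a$ and $\bar Y_j$, the delta method, and an arm-by-arm second-moment calculation using disjoint indicators, with the same covariance computations. Your $\psi_\Delta$ equals the paper's $\frac{A-\pi}{\pi(1-\pi)}\epsilon_M$. The problem is the $\hat\gamma_a$ step. Your oracle estimator regresses $Y-f_a(\mathbf X)$ on the centred residual $M-f_M(\mathbf X)$ with $f_M$ known, and you then assert the influence function $\frac{I(A=a)}{\pi_a}\epsilon_M\epsilon_a$. That estimator does not have this influence function. In arm $a$,
\[
Y-f_a(\mathbf X)=\gamma_a f_M(\mathbf X)+\gamma_a a\Delta_M+\gamma_a\epsilon_M+\epsilon_a ,
\]
so subtracting $f_a$ leaves the covariate signal that reaches $Y$ through the mediator. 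Because $\epsilon_M$ is independent of $\mathbf X$, the population slope is still $\gamma_a$. However, the regression residual is $\epsilon_a+\gamma_a\{f_M(\mathbf X)-Ef_M(\mathbf X)\}$. The influence function is therefore $\frac{I(A=a)}{\pi_a}\epsilon_M[\epsilon_a+\gamma_a\{f_M(\mathbf X)-Ef_M(\mathbf X)\}]$, with variance $\{\sigma_a^2+\gamma_a^2\operatorname{Var}f_M(\mathbf X)\}/\pi_a$. Carried through your remaining steps, the cross terms still vanish. You would obtain an extra $\Delta_M^2\gamma_a^2\operatorname{Var}f_M(\mathbf X)/\pi_a$ in $V_\delta^R(a)$ and an extra $\Delta_M^2\gamma_{1-a}^2\operatorname{Var}f_M(\mathbf X)/\pi_{1-a}$ in $V_\zeta^R(a)$. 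So the formulas you write down do not follow from the estimator you analyse.

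The repair is what the paper's Frisch--Waugh--Lovell step does. One option is to residualize the outcome by the full conditional mean $E(Y\mid\mathbf X,A=a)=f_a(\mathbf X)+\gamma_a\{f_M(\mathbf X)+a\Delta_M\}$, so that the residualized outcome is $\gamma_a\epsilon_M+\epsilon_a$. The other is to residualize $M$ on $\mathbf X$ in-sample. When $f_M$ lies in the span of the adjustment, this forces $\sum_{i:A_i=a}\tilde M_i f_M(\mathbf X_i)=0$ and $\sum_{i:A_i=a}\tilde M_iM_i=\sum_{i:A_i=a}\tilde M_i^2$ exactly. Either way the regression residual is $\epsilon_a$, and the rest of your argument goes through unchanged.

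This also reverses your ``expected obstacle'' remark. Relative to your oracle, estimating $f_M$ is not first-order negligible, because your score is not orthogonal in the $f_M$ direction. Its pathwise derivative along $h(\mathbf X)$ is $-\gamma_a E[h(\mathbf X)\{f_M(\mathbf X)-Ef_M(\mathbf X)\}]$. That first-order effect is precisely what cancels the extra term. So it is the feasible residualized regression, not your oracle, that delivers the theorem's variance. Your arm-$(1-a)$ cross-term expansion also drops $\gamma_{1-a}\{f_M(\mathbf X)-Ef_M(\mathbf X)\}$, but that omission is harmless there. Any function of $\mathbf X$ times $\epsilon_M\epsilon_{1-a}$ has mean zero.
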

 The first variance formula has the usual product-of-coefficients structure: uncertainty in the treatment-mediator path contributes through $\gamma_a^2$, while uncertainty in the mediator-outcome path contributes through $\Delta_M^2$. The second formula is not obtained by simply adding the variance of the total-effect estimator and the variance of the product term. It accounts for their covariance, because both components are estimated from the same randomized trial.
 
 Based on these variances, the required sample size for a one-sided Wald-type test of $H_0:\Delta=0$ against $H_1:\Delta>0$, under a design alternative $\Delta=\Delta_*>0$, is approximately
 \[
 	\frac{
 		\left(z_{1-\alpha}+z_{1-\beta}\right)^2 V
 	}{
 		\Delta_*^2
 	},
 \]
 where $V$ is $V_\delta^R(a)$ for the indirect effect and $V_\zeta^R(a)$ for the direct effect. A two-sided test can be handled analogously by replacing $z_{1-\alpha}$ with $z_{1-\alpha/2}$. Thus, for the same effect size, variance, power, and type-I error level, the two-sided design requires a larger sample size because $z_{1-\alpha/2}>z_{1-\alpha}$. In practice, the one-sided formula is appropriate when the direction of the mediation effect is specified in advance; otherwise, the two-sided version should be used.
 
\subsection{Restricted Baron--Kenny working model}

 	The regression-based formulas above include the classical Baron--Kenny
 working model as a special case. 
 Specifically, suppose
 \[
 M(a)=f_M(\mathbf X)+a\Delta_M+\epsilon_M,
 \]
 and
 \[
 Y(a,m)=f_Y(\mathbf X)+\zeta_{\mathrm{BK}}a+\gamma m+\epsilon_Y,
 \]
 where the mediator-to-outcome coefficient is common across treatment levels. 	Then
 \[
 \delta(0)=\delta(1)=\gamma\Delta_M.
 \]
 Moreover, under this restricted model,
 \[
 \zeta(0)=\zeta(1)=\zeta_{\mathrm{BK}}.
 \]
 Thus the regression coefficient of $A$ in the pooled outcome regression
 adjusting for $M$ and $\mathbf X$ coincides with the natural direct effect
 only because the restricted working model rules out treatment modification of
 the mediator-to-outcome path.
 
 \begin{corollary}[Variance formulas under the restricted Baron--Kenny model]
 	\label{cor:bk-model}
 	Suppose the restricted Baron--Kenny working model in the preceding remark is
 	correct, with
 	\[
 	E(\epsilon_M\mid \mathbf X)=0,\quad
 	E(\epsilon_M^2\mid \mathbf X)=1, \quad
 	E(\epsilon_Y\mid \mathbf X)=0,\quad
 	E(\epsilon_Y^2\mid \mathbf X)=\sigma_Y^2,
 	\]
 	and the usual least-squares regularity conditions hold. Let
 	$
 	\hat\delta_{\mathrm{BK}}=\hat\gamma\hat\Delta_M
 	$
 	be the product-of-coefficients estimator. Then
 	\[
 	\sqrt N
 	\left(
 	\hat\delta_{\mathrm{BK}}-\gamma\Delta_M
 	\right)
 	\stackrel{d}{\to}
 	N\left(0,V_\delta^{\mathrm{BK}}\right),
 	\]
 	where
 	\[
 	V_\delta^{\mathrm{BK}}
 	=
 	\frac{\gamma^2}{\pi(1-\pi)}
 	+
 	\Delta_M^2\sigma_Y^2 .
 	\]
 	If $\hat\zeta_{\mathrm{BK}}$ denotes the coefficient of $A$ in the pooled
 	least-squares regression of $Y$ on $A$, $M$, and $\mathbf X$, then
 	\[
 	\sqrt N
 	\left(
 	\hat\zeta_{\mathrm{BK}}-\zeta_{\mathrm{BK}}
 	\right)
 	\stackrel{d}{\to}
 	N\left(0,V_\zeta^{\mathrm{BK}}\right),
 	\]
 	where
 	\[
 	V_\zeta^{\mathrm{BK}}
 	=
 	\sigma_Y^2
 	\left\{
 	\frac{1}{\pi(1-\pi)}
 	+
 	\Delta_M^2
 	\right\}.
 	\]
 \end{corollary}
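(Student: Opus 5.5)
The plan is to treat both claims as standard M-estimation results for least squares under the restricted Baron--Kenny model. I will first partial out the baseline covariates, then use the Frisch--Waugh--Lovell representation, and finally apply the delta method for the product term. Throughout I will take the prognostic functions $f_M$ and $f_Y$ to be correctly captured by the covariate adjustment in the fitted regressions, i.e.\ linear in the regressors used. I will also use the working-model assumptions of Section~\ref{app:continuous-regression}, namely $\epsilon_M\indep\epsilon_Y\mid\mathbf X$ and $\epsilon_M\indep\mathbf X$, together with randomization $A\indep(\mathbf X,\epsilon_M,\epsilon_Y)$ with $\pi=P(A=1)$.

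\emph{Indirect effect.} Write $\tilde A=A-\pi$. In the mediator regression of $M$ on $(A,\mathbf X)$, randomization makes $A$ orthogonal to $\mathbf X$ in the limit, so $\hat\Delta_M-\Delta_M$ has influence function $\tilde A\epsilon_M/\{\pi(1-\pi)\}$, with asymptotic variance $E(\epsilon_M^2)/\{\pi(1-\pi)\}=1/\{\pi(1-\pi)\}$. In the pooled outcome regression of $Y$ on $(A,M,\mathbf X)$, the residual of $M$ after projecting on $(A,\mathbf X)$ is $\epsilon_M$. Hence $\hat\gamma-\gamma$ has influence function $\epsilon_M\epsilon_Y/E(\epsilon_M^2)=\epsilon_M\epsilon_Y$, with variance $E(\epsilon_M^2\epsilon_Y^2)=E\{\epsilon_M^2E(\epsilon_Y^2\mid\mathbf X,\epsilon_M)\}=\sigma_Y^2$, using conditional independence and homoscedasticity. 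The cross-covariance of the two influence functions is $E\{\tilde A\epsilon_M^2\epsilon_Y\}/\{\pi(1-\pi)\}=0$, because $E(\epsilon_Y\mid\mathbf X,\epsilon_M)=0$ and $A$ is independent of everything else. The joint CLT plus the delta method for $\hat\gamma\hat\Delta_M$ then gives $\gamma^2/\{\pi(1-\pi)\}+\Delta_M^2\sigma_Y^2$.

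\emph{Direct effect.} By Frisch--Waugh--Lovell, $\hat\zeta_{\mathrm{BK}}-\zeta_{\mathrm{BK}}$ has influence function $e_A\epsilon_Y/E(e_A^2)$, where $e_A$ is the population residual of $A$ after projection on $(M,\mathbf X)$. After removing $\mathbf X$, the centered mediator is $\tilde M=\Delta_M\tilde A+\epsilon_M$. Projecting $\tilde A$ on $\tilde M$ gives coefficient $\Delta_M\pi(1-\pi)/\{\Delta_M^2\pi(1-\pi)+1\}$, and a short calculation gives
\[
E(e_A^2)=\frac{\pi(1-\pi)}{1+\Delta_M^2\pi(1-\pi)}.
\]
Since $e_A$ is a function of $(A,\epsilon_M)$ and $E(\epsilon_Y^2\mid A,\epsilon_M,\mathbf X)=\sigma_Y^2$, the asymptotic variance is $\sigma_Y^2/E(e_A^2)=\sigma_Y^2\{1/(\pi(1-\pi))+\Delta_M^2\}$.

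The main obstacle I expect is justifying the covariate partialling. The clean residual identities, namely that $\epsilon_M$ is the residual of $M$ on $(A,\mathbf X)$ and that $\tilde M=\Delta_M\tilde A+\epsilon_M$, require $f_M$ and $f_Y$ to lie in the span of the fitted covariate terms. If they do not, the projection residuals acquire extra terms from $f_M$ and $f_Y$. I would state this as part of the ``usual least-squares regularity conditions.'' I would then verify, as a finite-dimensional linear algebra step, that estimating the $\mathbf X$ coefficients does not alter the influence functions. This holds because the scores for those coefficients are orthogonal to $\tilde A\epsilon_M$, $\epsilon_M\epsilon_Y$, and $e_A\epsilon_Y$.
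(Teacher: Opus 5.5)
Your proposal is correct and follows essentially the same route as the paper: Frisch--Waugh--Lovell residualization gives the influence functions $\tilde A\epsilon_M/\{\pi(1-\pi)\}$ and $\epsilon_M\epsilon_Y$, their cross-covariance is zero, and the delta method then yields $V_\delta^{\mathrm{BK}}$. For the direct effect, your $1/E(e_A^2)$ is the Schur-complement form of the paper's $\{Q^{-1}\}_{11}$, and your explicit linear-span and conditional-independence assumptions make precise what the paper leaves implicit in ``the usual least-squares regularity conditions.''
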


\subsection{Proof of Theorem~\ref{th1}}
 
By the Frisch--Waugh--Lovell theorem, $\hat{\Delta}_M$ and
$\hat{\gamma}_a$ can be written as residualized least-squares estimators.
For $\Delta_M$, treatment randomization gives the population residual
$A-\pi$, and for $\gamma_a$, the population residualized mediator within
arm $a$ is $M-E(M\mid X,A=a)=\epsilon_M$.
Under the correctly specified structural models and standard least-squares
regularity conditions, simple calculation gives us \[
	\sqrt{N}(\hat{\Delta}_M - \Delta_M) = \frac{1}{\sqrt{N}} \sum_{i=1}^N \frac{A_i-\pi}{\pi(1-\pi)} \epsilon_{M,i}+o_p(1) \stackrel{d}{\to} \mathcal{N}(0, V_{\Delta_M}),
\]
where $V_{\Delta_M} = \frac{1}{\pi^2(1-\pi)^2}E\left[ (A-\pi)^2 \epsilon_M^2\right]= \frac{1}{\pi(1-\pi)}$. Also \[
	\sqrt{N}(\hat{\gamma}_a- \gamma_a) = \frac{1}{\sqrt{N}} \sum_{i=1}^N \frac{I(A_i=a)}{\pi_a} \epsilon_{M,i} \epsilon_{a,i} +o_p(1) \stackrel{d}{\to} \mathcal{N}(0, V_{\gamma_a}),
\]
where $V_{\gamma_a} = \frac{1}{\pi_a^2}E\left[ I(A=a) \epsilon_M^2\epsilon_a^2\right] = \frac{\sigma_a^2}{\pi_a}$. The covariance between $\Delta_M$ and $\gamma_a$ can be computed as \[
	\operatorname{Cov} (\hat{\Delta}_M , \hat{\gamma}_a)=E \left[ \frac{A-\pi}{\pi(1-\pi)}  \epsilon_M \frac{I(A=a)}{\pi_a} \epsilon_M\epsilon_a \right]= 0,
\]
since $E(\epsilon_M^2 \epsilon_a \mid \mathbf{X}) = E(\epsilon_M^2\mid \mathbf{X}) E(\epsilon_a \mid \mathbf{X}) = 0$. Using delta method, we have \[
	V^R_{\delta}(a) = \gamma_a^2 V_{\Delta_M} +  \Delta_M^2 V _{\gamma_a} + 2\gamma_a\Delta_M \operatorname{Cov} (\hat{\Delta}_M,\hat{\gamma}_a)=\frac{\gamma_a
		^2}{\pi(1-\pi)}+ \frac{\Delta_M^2\sigma_a^2}{\pi_a}.
\]

The variance of $\hat\zeta_R(a)$ cannot be obtained by simply adding the
variance of $\bar Y_1-\bar Y_0$ and the variance of
$\hat\gamma_{1-a}\hat\Delta_M$, because the sample mean difference and
$\hat\Delta_M$ are estimated from the same randomized trial. To account for
this dependence, we use the following first-order expansion:
\[
\sqrt N\{\hat\zeta_R(a)-\zeta(a)\}
=
\frac{1}{\sqrt N}\sum_{i=1}^N
\varphi_{\zeta,a}(O_i)+o_p(1),
\]
where
\[
\begin{aligned}
	\varphi_{\zeta,a}(O_i)
	={}&
	\frac{I(A_i=1)}{\pi}
	\{Y_i-\theta(1,1)\}
	-
	\frac{I(A_i=0)}{1-\pi}
	\{Y_i-\theta(0,0)\}  \\
	&-
	\gamma_{1-a}
	\left\{
	\frac{I(A_i=1)}{\pi}\epsilon_{M,i}
	-
	\frac{I(A_i=0)}{1-\pi}\epsilon_{M,i}
	\right\} -
	\Delta_M
	\frac{I(A_i=1-a)}{\pi_{1-a}}
	\epsilon_{M,i}\epsilon_{1-a,i}.
\end{aligned}
\]
The first line is the influence function of the randomized-arm mean
difference, the second line is the contribution from estimating
$\Delta_M$ and 
$\gamma_{1-a}$.

Since $A$ is randomized, the two treatment arms contribute additively to the
asymptotic variance. In the arm $A=a$, the corresponding residual component
is, up to an irrelevant sign,
\[
Y(a,M(a))-\theta(a,a)-\gamma_{1-a}\epsilon_M .
\]
Under the structural working model,
$
\operatorname{Cov}\{Y(a,M(a)),\epsilon_M\}=\gamma_a.
$
Therefore,
\[
\begin{aligned}
	&\operatorname{Var}
	\left\{
	Y(a,M(a))-\theta(a,a)-\gamma_{1-a}\epsilon_M
	\right\}   =
	S_a^2+\gamma_{1-a}^2-2\gamma_a\gamma_{1-a}  =
	S_a^2-\gamma_a^2+(\gamma_a-\gamma_{1-a})^2 .
\end{aligned}
\]
In the arm $A=1-a$, the corresponding residual component is
\[
Y(1-a,M(1-a))-\theta(1-a,1-a)-\gamma_{1-a}\epsilon_M ,
\]
together with the additional term from estimating $\gamma_{1-a}$. Since
\[
\operatorname{Var}
\left\{
Y(1-a,M(1-a))-\theta(1-a,1-a)-\gamma_{1-a}\epsilon_M
\right\}
=
S_{1-a}^2-\gamma_{1-a}^2
\]
and
\[
\operatorname{Var}(\epsilon_M\epsilon_{1-a})=\sigma_{1-a}^2,
\]
while the corresponding covariance term is zero under the stated orthogonality
conditions, the contribution from the arm $A=1-a$ is
\[
S_{1-a}^2-\gamma_{1-a}^2+\Delta_M^2\sigma_{1-a}^2.
\]
Combining the two arms gives
\[
V_\zeta^R(a)
=
\frac{
	S_a^2-\gamma_a^2+(\gamma_a-\gamma_{1-a})^2
}{\pi_a}
+
\frac{
	S_{1-a}^2-\gamma_{1-a}^2+\Delta_M^2\sigma_{1-a}^2
}{\pi_{1-a}}.
\]

\subsection{Proof of Corollary~\ref{cor:bk-model}}
	\begin{proof}
	Under the Baron--Kenny restriction, the mediator-to-outcome coefficient does not
	depend on the treatment level. Hence,
	\[
	\theta(a,1)-\theta(a,0)
	=
	\gamma\{E[M(1)]-E[M(0)]\}
	=
	\gamma\Delta_M.
	\]
	This gives the product representation of the indirect effect.
	
	By the Frisch--Waugh--Lovell theorem, the residualized treatment in the mediator
	regression is $A-\pi$, and the residualized mediator in the outcome regression
	after adjusting for $A$ and $\mathbf X$ is $\epsilon_M$. Since
	$E(\epsilon_M^2)=1$, the least-squares influence functions are
	\[
	\frac{A-\pi}{\pi(1-\pi)}\epsilon_M
	\]
	for $\hat\Delta_M$, and
	\[
	\epsilon_M\epsilon_Y
	\]
	for $\hat\gamma$. Therefore,
	\[
	V_{\Delta_M}
	=
	\frac{1}{\pi(1-\pi)},
	\qquad
	V_\gamma
	=
	\sigma_Y^2.
	\]
	Their covariance is zero because
	\[
	E\left[
	\frac{A-\pi}{\pi(1-\pi)}
	\epsilon_M^2\epsilon_Y
	\right]
	=
	0.
	\]
	Applying the delta method to
	$\hat\delta_{\mathrm{BK}}=\hat\gamma\hat\Delta_M$ gives
	\[
	V_\delta^{\mathrm{BK}}
	=
	\gamma^2V_{\Delta_M}
	+
	\Delta_M^2V_\gamma
	=
	\frac{\gamma^2}{\pi(1-\pi)}
	+
	\Delta_M^2\sigma_Y^2.
	\]
	
	For the direct effect, consider the residualized outcome regression
	\[
	Y=f_Y(\mathbf X)+\zeta_{\mathrm{BK}}A+\gamma M+\epsilon_Y.
	\]
	After adjusting for $\mathbf X$, write $A_c=A-\pi$ and
	$M_c=\Delta_M A_c+\epsilon_M$. The design matrix for $(A_c,M_c)$ has second
	moment
	\[
	Q=
	E
	\begin{pmatrix}
		A_c^2 & A_cM_c\\
		A_cM_c & M_c^2
	\end{pmatrix}
	=
	\begin{pmatrix}
		\pi(1-\pi) & \Delta_M\pi(1-\pi)\\
		\Delta_M\pi(1-\pi) & \Delta_M^2\pi(1-\pi)+1
	\end{pmatrix}.
	\]
	Thus the first diagonal element of $Q^{-1}$, corresponding to the coefficient
	of $A$, is
	\[
	\{Q^{-1}\}_{11}
	=
	\frac{1}{\pi(1-\pi)}
	+
	\Delta_M^2.
	\]
	Since the outcome regression residual variance is $\sigma_Y^2$, we obtain
	\[
	V_\zeta^{\mathrm{BK}}
	=
	\sigma_Y^2
	\left\{
	\frac{1}{\pi(1-\pi)}
	+
	\Delta_M^2
	\right\}.
	\]
\end{proof}

\label{supp:lastpage}
\label{lastpage}

\end{document}